\documentclass[journal,twocolumn,10pt,twoside]{IEEEtranTCOM}
\newtheorem{definition}{Definition}
\newtheorem{theorem}{Theorem}
\newtheorem{lemma}{Lemma}
\newtheorem{remark}{Remark}

\usepackage{cite}
\usepackage{amsmath,amssymb,amsfonts}
\usepackage{algorithmic}
\usepackage{graphicx}
\usepackage{textcomp}
\usepackage{xcolor}
\usepackage{subcaption}  
\usepackage{booktabs}   
\usepackage{hyperref}
\allowdisplaybreaks

\normalsize

\ifCLASSINFOpdf
\else
\fi

\begin{document}
		
		\title{On the Fundamental Limits of Bistatic ISAC Over Vector Channels}

			
				\author{Dongsheng Peng,
				Lei Yu,
				Ping Chen,
				and Jun Chen
			
			
			
		}
		
		\maketitle

		\begin{abstract}
			This paper studies the fundamental limits of bistatic integrated sensing and communications (ISAC) systems over vector channels. We first consider a class of parallel channels in which, on each component channel, the channel from the transmitter (Tx) to the communication receiver (Rx) is degraded with respect to the channel from the Tx to the sensing Rx, or vice versa, where the direction of degradation may vary across component channels. For this class of channels, we establish matching achievability and converse results, which yield a single-letter characterization of the capacity-distortion function $C(D)$. For parallel Gaussian channels with additive Gaussian states, the evaluation of the capacity-distortion-power function $C(D,P)$ under the squared error distortion measure reduces to a joint power and distortion allocation problem across the component channels. We then extend the analysis to general vector channels and derive single-letter lower and upper bounds on $C(D)$. In particular, for vector Gaussian channels with additive Gaussian states, we prove the optimality of Gaussian superposition coding without requiring a global degradation order between the communication and sensing channels, by combining the channel enhancement technique with a matrix extremal inequality based on Fisher information. This leads to an exact characterization of $C(D,P)$.
		\end{abstract}

		\begin{IEEEkeywords}
			Integrated sensing and communications (ISAC), 
			 superposition coding, channel enhancement, extremal inequality.
		\end{IEEEkeywords}

		%
		\IEEEpeerreviewmaketitle

		\section{Introduction}
		
		Future wireless networks are expected to reuse the spectrum, waveforms, and hardware of communication systems to sense the surrounding environment~\cite{Liu2022JSAC,LiuA2022Survey}. This technology, which simultaneously performs information transmission and environmental sensing over shared wireless resources, is referred to as integrated sensing and communications (ISAC). A central question in ISAC system design is how much communication rate must be sacrificed to achieve a given sensing accuracy, and which transmission strategy achieves the optimal tradeoff. Information theory provides a natural performance benchmark for this question: the sensing target is modeled as a random state of the channel, the sensing accuracy is measured by the expected distortion of state estimation, and the fundamental tradeoff between communication rate and sensing accuracy is characterized by the capacity-distortion function $C(D)$, which is defined as the maximum achievable communication rate subject to a distortion constraint $D$.
		
		Existing information-theoretic studies have mainly focused on monostatic sensing, in which the sensing receiver (Rx) is colocated with the transmitter (Tx) and therefore has knowledge of the channel input. For memoryless state-dependent channels with independent and identically distributed (i.i.d.) states, the tradeoff between communication and sensing in this case is governed solely by constraints on the channel input distribution~\cite{Kobayashi2018ISIT,Ahmadipour2024IT}. In Gaussian channels, this structure manifests itself as a deterministic-random tradeoff: sensing favors deterministic transmit signals, whereas the communication rate relies on the randomness of the signal~\cite{Xiong2023IT,LiuF2023ISIT}. Closely related works also include early studies on joint transmission and state estimation~\cite{Zhang2011IT}, channel capacity and state estimation in state-dependent Gaussian channels~\cite{Sutivong2005IT,Kim2008IT}, and the formulation of the monostatic ISAC problem under logarithmic loss in~\cite{Joudeh2024JCS}.

		In bistatic sensing, the sensing Rx is physically separated from the Tx, which avoids full-duplex self-interference and can provide geometric diversity~\cite{GriffithsBaker2017,Colone2009TAES}. From an information-theoretic perspective, the sensing Rx knows the codebook but not the transmitted message; hence, the randomness of the information-bearing signal acts as a form of interference for sensing. Jiao et al.~\cite{Jiao2023bistatic} established a capacity-distortion framework for bistatic ISAC under the assumption that the communication Rx knows the channel state, derived a multi-letter characterization of $C(D)$ together with several single-letter lower and upper bounds, and showed that these bounds are tight for degraded channels.
		Subsequently, Chen et al.~\cite{Chen2026TCOM} relaxed the assumption that the communication Rx knows the channel state and adopted logarithmic loss to measure the sensing accuracy, requiring the sensing Rx to output a soft estimate of the state sequence. They showed that $C(D)$ admits a single-letter characterization whenever one of the communication and sensing channels is degraded with respect to the other. Specifically, when the communication channel is degraded with respect to the sensing channel, $C(D)$ exhibits a decoupling property similar to that in monostatic sensing, i.e., the sensing distortion affects the communication rate only through constraints on the channel input distribution; when the sensing channel is degraded with respect to the communication channel, a superposition coding scheme with partial decoding at the sensing Rx achieves the optimal tradeoff.
		In addition, Jiao et al.~\cite{Jiao2026AB} transformed the nonconvex distortion constraint into a linear one by introducing auxiliary variables, and proposed an extended Arimoto--Blahut algorithm applicable to both squared error and logarithmic loss for numerically computing $C(D)$ in bistatic ISAC.

		However, existing exact single-letter characterizations of $C(D)$ for bistatic ISAC mainly rely on a global degradation order between the communication and sensing channels, a condition that is often difficult to satisfy in multicarrier and multi-antenna systems. In orthogonal frequency-division multiplexing (OFDM) ISAC systems~\cite{Sturm2011ProcIEEE}, the communication Rx and the sensing Rx are located at different positions and experience different frequency-selective fading; hence, their relative strength may vary across subcarriers, and the direction of degradation may differ from one subcarrier to another. In multi-antenna systems, the effective noise covariance matrices of the communication and sensing channels are in general not ordered in the positive semidefinite sense, and thus no global degradation order exists between the two channels. In classical broadcast channel theory, these two cases correspond to the product of reversely degraded broadcast channels~\cite{ElGamal1980product} and the Gaussian multiple-input multiple-output (MIMO) broadcast channel~\cite{Weingarten2006IT}, respectively. In ISAC, however, the task of the sensing Rx is to estimate the channel state subject to a distortion constraint rather than to decode its own message. Since the sensing distortion is coupled with both the channel input distribution and the partial decoding at the sensing Rx, the above classical results do not directly apply.
		
		To address these two cases, this paper extends the analysis of bistatic ISAC to parallel channels and general vector channels. For parallel channels, different component channels are allowed to have different directions of degradation; for general vector channels, a global degradation order between the communication and sensing channels is no longer required. The main contributions of this paper are summarized as follows.
		\begin{itemize}
			\item \emph{Parallel channels:} For parallel channels in which each component channel is degraded in one direction but the direction of degradation may differ across component channels, we establish matching achievability and converse results, which yield a single-letter characterization of $C(D)$. In the achievable scheme, the common layer is jointly encoded across all component channels, which exploits the complementary degradation directions of the two types of component channels and yields a rate gain over independent coding on each component channel. Furthermore, for parallel Gaussian channels with additive Gaussian states under the squared error distortion measure, building on the optimality of Gaussian superposition inputs for scalar Gaussian channels in~\cite{Chen2026TCOM}, we reduce the computation of the capacity-distortion-power function $C(D,P)$ to a convex joint power and distortion allocation problem across the component channels, and show that its solution has a water-filling structure in both power and distortion.
			
			\item \emph{General vector channels:} For general vector channels without a global degradation order between the communication and sensing channels, we establish single-letter lower and upper bounds on $C(D)$. Furthermore, for vector Gaussian channels with additive Gaussian states, we use the conditional Fisher information matrix of the sensing observation to convert the squared error distortion constraint into a constraint on an equivalent covariance matrix. Building on this, we construct an enhanced communication channel based on the Karush--Kuhn--Tucker (KKT) conditions of the corresponding optimization problem and, combined with a matrix extremal inequality based on Fisher information, prove that Gaussian superposition coding is optimal, thereby yielding an exact characterization of $C(D,P)$. In this optimal scheme, the sensing Rx performs partial decoding to assist state estimation.
		\end{itemize}

		The rest of this paper is organized as follows. Section~\ref{sec:model} introduces the bistatic ISAC system model and defines $C(D)$. Section~\ref{sec:parallel} presents a single-letter characterization of $C(D)$ for parallel channels and discusses the computation of $C(D,P)$ for parallel Gaussian channels. Section~\ref{sec:vector} derives single-letter lower and upper bounds on $C(D)$ for general vector channels and provides an exact characterization of $C(D,P)$ for vector Gaussian channels with additive Gaussian states. Section~\ref{sec:conclusion} concludes the paper.
		
		\emph{Notation}: Scalar random variables and their realizations are denoted by uppercase and lowercase italic letters, respectively, e.g., $X$ and $x$; random vectors and their realizations are denoted by boldface uppercase and lowercase letters, respectively, e.g., $\mathbf{X}$ and $\mathbf{x}$; matrices are denoted by non-boldface uppercase letters, with the exception of the identity matrix and the zero matrix, which are denoted by $\mathbf{I}$ and $\mathbf{0}$, respectively. Sets are denoted by calligraphic letters, $|\mathcal{X}|$ denotes the cardinality of $\mathcal{X}$, and $[a:b]:=\{a,\ldots,b\}$. We write $X^n:=(X_1,\ldots,X_n)$, $X_{t+1}^n:=(X_{t+1},\ldots,X_n)$, and $X^{n\setminus t}:=(X^{t-1},X_{t+1}^n)$. The logarithm $\log$ is to base $2$, and $\ln$ denotes the natural logarithm; $J(\cdot)$ denotes the Fisher information matrix defined with respect to the natural logarithm. For matrices $A$ and $B$, $A\succeq B$ ($A\succ B$) means that $A-B$ is positive semidefinite (positive definite); $|A|$ and $\operatorname{tr}(A)$ denote the determinant and the trace of $A$, respectively; and $\|\cdot\|_2$ denotes the Euclidean norm of a vector or the spectral norm of a matrix. Finally, $\operatorname{var}(X|Y) := \mathbb{E}[(X - \mathbb{E}[X|Y])^2]$ and $\operatorname{Cov}(\mathbf{X}|\mathbf{Y}) := \mathbb{E}[(\mathbf{X} - \mathbb{E}[\mathbf{X}|\mathbf{Y}])(\mathbf{X} - \mathbb{E}[\mathbf{X}|\mathbf{Y}])^{\top}]$ denote the conditional variance and the conditional covariance matrix, respectively.

		\section{System Model}\label{sec:model}
		\begin{figure}[t]
			\centering
			\includegraphics[width=1.0\linewidth]{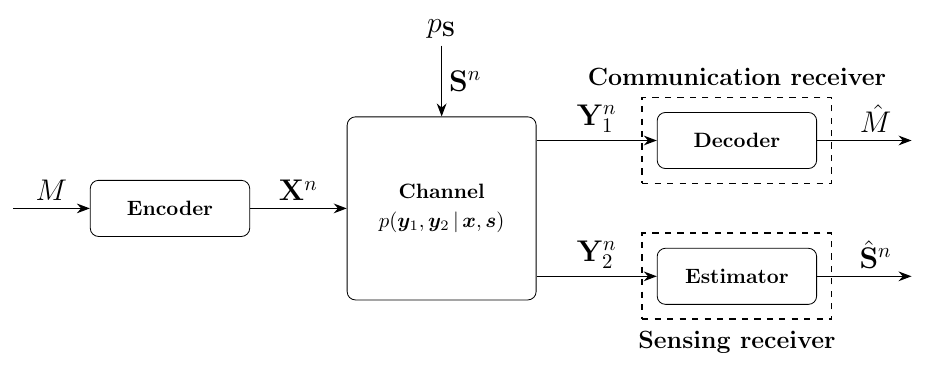}
			\caption{Bistatic ISAC system over a vector channel.}
			\label{fig:model}
		\end{figure}
		Consider the bistatic ISAC system shown in Fig.~\ref{fig:model}, which consists of a Tx, a communication Rx, and a sensing Rx. The Tx encodes a message $M$ into a codeword $\mathbf{X}^n$ of length $n$ and transmits it over a state-dependent memoryless broadcast channel. The communication Rx recovers the message $M$ from its received sequence $\mathbf{Y}_1^n$, and the sensing Rx estimates the channel state sequence $\mathbf{S}^n$ from its received sequence $\mathbf{Y}_2^n$.

		In each channel use, the channel input, the output at the communication Rx, the output at the sensing Rx, and the channel state are denoted by $\mathbf{X}\in\mathcal{X}$, $\mathbf{Y}_1\in\mathcal{Y}_1$, $\mathbf{Y}_2\in\mathcal{Y}_2$, and $\mathbf{S}\in\mathcal{S}$, respectively. The state sequence $\mathbf{S}^n=(\mathbf{S}_1,\ldots,\mathbf{S}_n)$ is i.i.d. according to $p_{\mathbf{S}}$ and is independent of the message $M$. The channel is memoryless, i.e.,
		\begin{align}
			&p_{\mathbf{Y}_1^n\mathbf{Y}_2^n\mathbf{S}^n|\mathbf{X}^n}(\mathbf{y}_1^n,\mathbf{y}_2^n,\mathbf{s}^n|\mathbf{x}^n)\notag\\
			&\quad=\prod_{t=1}^{n}p_{\mathbf{S}}(\mathbf{s}_t)\,
			p_{\mathbf{Y}_1\mathbf{Y}_2|\mathbf{X}\mathbf{S}}(\mathbf{y}_{1,t},\mathbf{y}_{2,t}|\mathbf{x}_t,\mathbf{s}_t),\label{eq:model-memoryless}
		\end{align}
		where the subscript $t$ indicates the $t$th channel use. For general discrete channels, all alphabets are assumed to be finite; for the Gaussian channels considered later, the random vectors take values in the corresponding Euclidean spaces.
		
		Let the message $M$ be uniformly distributed over $\mathcal{M}$. An $(n,\mathcal{M})$ code consists of the following three components:
		\begin{itemize}
			\item an encoder $f^{(n)}:\mathcal{M}\to\mathcal{X}^n$ that generates the codeword $\mathbf{X}^n=f^{(n)}(M)$;
			\item a communication decoder $g_1^{(n)}:\mathcal{Y}_1^n\to\mathcal{M}$ that outputs $\hat{M}=g_1^{(n)}(\mathbf{Y}_1^n)$; and
			\item a sensing estimator $g_2^{(n)}:\mathcal{Y}_2^n\to\hat{\mathcal{S}}^n$ that outputs $\hat{\mathbf{S}}^n=g_2^{(n)}(\mathbf{Y}_2^n)$, where $\hat{\mathcal{S}}$ is the state reconstruction alphabet.
		\end{itemize}
		
		The sensing performance is measured by a per-letter distortion function $d:\mathcal{S}\times\hat{\mathcal{S}}\to[0,\infty)$, and the average distortion is defined as
		\begin{equation}\label{eq:model-distortion}
			D^{(n)}
			:=\frac{1}{n}\sum_{t=1}^{n}
			\mathbb{E}\left[
			d(\mathbf{S}_t,\hat{\mathbf{S}}_t)
			\right].
		\end{equation}
		For discrete channels, since the alphabets are finite, we have $d_{\max}:=\max_{\mathbf{s},\hat{\mathbf{s}}}d(\mathbf{s},\hat{\mathbf{s}})<\infty$. Moreover, for any random variable $\boldsymbol{\Theta}$, we define the minimum conditional expected distortion as
		\begin{equation}\label{eq:model-dstar}
			d^*(\mathbf{S}|\boldsymbol{\Theta}):=\min_{\hat{s}(\cdot)}\mathbb{E}\big[d(\mathbf{S},\hat{s}(\boldsymbol{\Theta}))\big],
		\end{equation}
		where the minimization is over all mappings $\hat{s}(\cdot)$, and the minimum is attained by the Bayes estimator
		\begin{equation}\label{eq:model-bayes}
			\hat{s}^*(\boldsymbol{\theta}) \in \arg\min_{\hat{\mathbf{s}}\in\hat{\mathcal{S}}}\mathbb{E}\big[d(\mathbf{S},\hat{\mathbf{s}})\,\big|\,\boldsymbol{\Theta}=\boldsymbol{\theta}\big].
		\end{equation}
		
		\begin{definition}\label{def:achievable}
			For a given distortion constraint $D$, a rate-distortion pair $(R,D)$ is said to be achievable if, for any $\epsilon>0$ and all sufficiently large $n$, there exists an $(n,\mathcal{M})$ code such that
			\begin{align}
				\frac{1}{n}\log|\mathcal{M}|
				&\geq R-\epsilon, \label{eq:def-rate}\\
				\mathbb{P}\{\hat{M}\neq M\}
				&\leq\epsilon, \label{eq:def-error}\\
				D^{(n)}
				&\leq D+\epsilon. \label{eq:def-distortion}
			\end{align}
			The capacity-distortion function is defined as
			\begin{equation}\label{eq:model-CD}
				C(D)=\sup\{R:(R,D)\text{ is achievable}\}.
			\end{equation}
		\end{definition}

		\section{Parallel Channels}\label{sec:parallel}
		
		In this section, we consider a parallel bistatic ISAC system consisting of $K$ independent component channels. Let $\mathbf{X}=(X_1,\ldots,X_K)$, $\mathbf{Y}_i=(Y_{i1},\ldots,Y_{iK})$ ($i=1,2$), and $\mathbf{S}=(S_1,\ldots,S_K)$ denote the channel input, the channel outputs, and the state, respectively, where the subscript $k$ corresponds to the $k$th component channel; the corresponding components in the $t$th channel use are denoted by $X_{k,t}$, $Y_{ik,t}$, and $S_{k,t}$. The channel has a product structure, i.e.,
		\begin{equation}\label{eq:par-channel}
			p_{\mathbf{Y}_1\mathbf{Y}_2\mathbf{S}|\mathbf{X}}
			=\prod_{k=1}^{K}p_{S_k}\,p_{Y_{1k}Y_{2k}|X_kS_k},
		\end{equation}
		which means that the states of different component channels are mutually independent, and the outputs of the $k$th component channel depend only on $(X_k,S_k)$. Furthermore, we assume that $\hat{\mathcal{S}}=\hat{\mathcal{S}}_1\times\cdots\times\hat{\mathcal{S}}_K$ and that the distortion function is additive across component channels, i.e., $d(\mathbf{s},\hat{\mathbf{s}})=\sum_{k=1}^{K}d_k(s_k,\hat{s}_k)$,
		where $d_k:\mathcal{S}_k\times\hat{\mathcal{S}}_k\to[0,\infty)$ is the distortion function of the $k$th component channel.
		
		We do not require all component channels to have the same direction of degradation. Specifically, suppose that $[1:K]$ is partitioned into two disjoint sets $\mathcal{A}$ and $\mathcal{B}$ such that
		\begin{subequations}\label{eq:par-degraded}
			\begin{align}
				X_k&\leftrightarrow Y_{2k}\leftrightarrow Y_{1k},
				\qquad k\in\mathcal{A},\label{eq:par-degraded-a}\\
				X_k&\leftrightarrow Y_{1k}\leftrightarrow Y_{2k},
				\qquad  k\in\mathcal{B},\label{eq:par-degraded-b}
			\end{align}
		\end{subequations}
		that is, for each component channel in $\mathcal{A}$, the communication channel is degraded with respect to the sensing channel, whereas the opposite holds for each component channel in $\mathcal{B}$. Without loss of generality, let $\mathcal{A}=[1:r]$ and $\mathcal{B}=[r+1:K]$, where $0\le r\le K$; that is, either $\mathcal{A}$ or $\mathcal{B}$ may be empty.
		\begin{remark}\label{rem:marginal}
			Since the communication decoder uses only $\mathbf{Y}_1^n$ and the average distortion~\eqref{eq:model-distortion} depends only on $(\mathbf{S}^n,\mathbf{Y}_2^n)$, $C(D)$ depends on $p_{\mathbf{Y}_1\mathbf{Y}_2\mathbf{S}|\mathbf{X}}$ only through $p_{\mathbf{Y}_1|\mathbf{X}}$ and $p_{\mathbf{Y}_2\mathbf{S}|\mathbf{X}}$~\cite{Chen2026TCOM}. Hence, when deriving upper bounds, one may choose any joint distribution that preserves these two conditional distributions. In particular, the physical degradedness in~\eqref{eq:par-degraded} can be relaxed to stochastic degradedness~\cite[Sec.~III-C]{Chen2026TCOM}.
		\end{remark}
		
		The following theorem provides a single-letter characterization of $C(D)$ for parallel bistatic ISAC channels satisfying~\eqref{eq:par-degraded}.
		\begin{theorem}\label{thm:parallel}
			For the parallel bistatic ISAC channel satisfying~\eqref{eq:par-degraded},
			\begin{subequations}\label{eq:par-thm-opt}
				\begin{align}
					C(D)=\max\;&\min\{R_1,R_2\}\label{eq:par-thm}\\
					\text{s.t.}\;&\sum_{k\in\mathcal{A}}d_k^*(S_k|X_k,Y_{2k})+\sum_{k\in\mathcal{B}}d_k^*(S_k|U_k,Y_{2k})\le D,\label{eq:par-thm-dist}
				\end{align}
			\end{subequations}
			where
			\begin{align*}
				R_1&:=\sum_{k=1}^{K}I(X_k;Y_{1k}),\\
				R_2&:=\sum_{k\in\mathcal{A}}I(X_k;Y_{2k})
				+\sum_{k\in\mathcal{B}}\big[I(X_k;Y_{1k}|U_k)+I(U_k;Y_{2k})\big],
			\end{align*}
			the maximum is over all distributions $\prod_{k\in\mathcal{A}}p_{X_k}\prod_{k\in\mathcal{B}}p_{U_kX_k}$, the joint distribution factorizes as $\prod_{k\in\mathcal{A}}p_{X_k}\prod_{k\in\mathcal{B}}p_{U_kX_k}\,p_{\mathbf{Y}_1\mathbf{Y}_2\mathbf{S}|\mathbf{X}}$, the auxiliary random variables satisfy $|\mathcal{U}_k|\le|\mathcal{X}_k|+1$ for $k\in\mathcal{B}$, and $d_k^*(\cdot|\cdot)$ is defined as in~\eqref{eq:model-dstar}.
		\end{theorem}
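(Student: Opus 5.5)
The proof has two parts: an achievability argument built on superposition coding with a jointly encoded common layer, and a converse that single-letterizes per component channel. The two rates $R_1$ and $R_2$ will turn out to be the decoding constraints at the communication Rx and at the sensing Rx.

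\emph{Achievability.} Fix a distribution $\prod_{k\in\mathcal{A}}p_{X_k}\prod_{k\in\mathcal{B}}p_{U_kX_k}$. Split the message as $M=(M_0,M_1)$.
\begin{itemize}
\item The common message $M_0$ is carried jointly by the cloud centers $\mathbf{V}:=(X_k)_{k\in\mathcal{A}},(U_k)_{k\in\mathcal{B}}$. We generate $2^{nR_0}$ i.i.d.\ sequences $\mathbf{V}^n\sim\prod_t\prod_{k\in\mathcal{A}}p_{X_k}\prod_{k\in\mathcal{B}}p_{U_k}$.
\item For each cloud center, the private message $M_1$ is carried by satellites $X_k^n\sim\prod_t p_{X_k|U_k}$, $k\in\mathcal{B}$.
\end{itemize}
The communication Rx decodes $(M_0,M_1)$ jointly from $\mathbf{Y}_1^n$. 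Since the component channels are independent, the standard superposition analysis gives
\[
R_0+R_1<\sum_k I(X_k;Y_{1k}),\qquad R_1<\sum_{k\in\mathcal{B}}I(X_k;Y_{1k}|U_k).
\]
The first inequality uses the chain $\mathbf{V}\to\mathbf{X}$, under which $I(\mathbf{V},\mathbf{X};\mathbf{Y}_1)=I(\mathbf{X};\mathbf{Y}_1)$.

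The sensing Rx decodes only $M_0$ from $\mathbf{Y}_2^n$, which requires $R_0<\sum_{k\in\mathcal{A}}I(X_k;Y_{2k})+\sum_{k\in\mathcal{B}}I(U_k;Y_{2k})$. It then applies the symbol-by-symbol Bayes estimator $\hat s_k^*(X_{k,t},Y_{2k,t})$ for $k\in\mathcal{A}$ and $\hat s_k^*(U_{k,t},Y_{2k,t})$ for $k\in\mathcal{B}$. The usual argument (distortion bounded by $d_{\max}$ on the decoding error event, law of large numbers or typical average otherwise) gives distortion at most the left side of \eqref{eq:par-thm-dist}. Fourier--Motzkin elimination of $R_0$ then yields $R=\min\{R_1,R_2\}$.

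\emph{Converse.} By Remark~\ref{rem:marginal}, we may assume the physical degradedness \eqref{eq:par-degraded}. Set $U_{k,t}:=(M,\mathbf{Y}_2^{t-1},\mathbf{Y}_{1,t+1}^n)$, or a variant of it, and then time-share. The bound $nR\le\sum_t I(\mathbf{X}_t;\mathbf{Y}_{1,t})\le\sum_t\sum_k I(X_{k,t};Y_{1k,t})$ holds because the channel is a product channel. For $R_2$, the plan is to write
\[
nR\le I(M;\mathbf{Y}_1^n)=I(M;\mathbf{Y}_2^n)+[I(M;\mathbf{Y}_1^n)-I(M;\mathbf{Y}_2^n)]
\]
and to single-letterize the difference with the Csisz\'ar sum identity, treating the $\mathcal{A}$ and $\mathcal{B}$ components separately as in the El Gamal product-of-reversely-degraded argument~\cite{ElGamal1980product}. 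On $\mathcal{A}$ components, degradedness makes the difference terms nonpositive, so $Y_{2k}$ terms survive and are bounded by $I(X_k;Y_{2k})$. On $\mathcal{B}$ components, the terms become $I(X_k;Y_{1k}|U_k)+I(U_k;Y_{2k})$.

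The key design choice is $U_{k,t}$ for $k\in\mathcal{B}$. I would try
\[
U_{k,t}=\bigl(M,\mathbf{Y}_2^{t-1},\mathbf{Y}_{1,t+1}^n,\,(Y_{2j,t})_{j\ne k}\bigr),
\]
or a sequential per-component peeling in which components are revealed one at a time. This choice must do two things simultaneously:
\begin{enumerate}
\item Markov chains $U_{k,t}\to X_{k,t}\to(Y_{1k,t},Y_{2k,t},S_{k,t})$ hold, and the $U_k$ for different $k$ can be made independent across $k$. Independence is not required for the bound: the rate expressions depend only on the per-$k$ marginals, so we can replace the joint law by the product of its marginals.
\item The distortion is lower bounded. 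For $k\in\mathcal{B}$, $\hat S_{k,t}$ is a function of $\mathbf{Y}_2^n$, and $\mathbf{Y}_2^n$ is a function of $(U_{k,t},Y_{2k,t},\mathbf{Y}_{2,t+1}^n)$. Because $\mathbf{Y}_{2,t+1}^n$ depends on $S_{k,t}$ only through $(M,\ldots)$, which is contained in $U_{k,t}$, we get $\mathbb{E}d_k\ge d_k^*(S_{k,t}|U_{k,t},Y_{2k,t})$. For $k\in\mathcal{A}$, $S_{k,t}\to(X_{k,t},Y_{2k,t})\to(\text{rest})$ holds by memorylessness, giving $d_k^*(S_k|X_k,Y_{2k})$.
\end{enumerate}

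The main obstacle is making this auxiliary choice consistent with both the rate chain and the distortion Markov condition at once. The distortion argument needs $U$ to contain enough of $\mathbf{Y}_2$ that future sensing outputs are conditionally independent of $S_{k,t}$, whereas the Csisz\'ar identity mixes past $\mathbf{Y}_2$ with future $\mathbf{Y}_1$. I would handle this by working with the degraded surrogates: on $\mathcal{B}$, $Y_{2k}$ is a degraded version of $Y_{1k}$, so future $Y_{2k}$ can be generated from future $Y_{1k}$ plus independent noise, as in~\cite{Chen2026TCOM}. Finally, concavity of the objective, linearity of $d^*$ in the time-sharing variable, and Carath\'eodory's theorem give the cardinality bound $|\mathcal{U}_k|\le|\mathcal{X}_k|+1$.
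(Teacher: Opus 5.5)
Your achievability argument is correct and is essentially the paper's. The paper decodes cloud and satellite successively at the communication Rx rather than jointly, but after eliminating $R_0$ both give $\min\{R_1,R_2\}$.

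The converse has a genuine gap: the auxiliary is never pinned down, and the ones you write down fail. Both explicit choices put $M$ into $U_{k,t}$, so $X_k=X_{k,Q}$ is a function of $U_k=(Q,U_{k,Q})$. Then $I(X_k;Y_{1k}|U_k)=0$ and $I(U_k;Y_{2k})=I(X_k;Y_{2k})$, so the bound you are aiming for collapses to $R\le\sum_k I(X_k;Y_{2k})$. That is false: take $\mathcal{A}=\emptyset$, $D\ge d_{\max}$, and a sensing output independent of $\mathbf{X}$. Your distortion step relies precisely on $M\in U_{k,t}$, so it gets its Markov chain by destroying the rate bound. Since the sensing Rx decodes nothing, the auxiliary must be built from channel outputs only.

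If you drop $M$ and keep the Csisz\'ar expansion of $I(M;\mathbf{Y}_1^n)-I(M;\mathbf{Y}_2^n)$, the auxiliary becomes $(\mathbf{Y}_2^{t-1},\mathbf{Y}_{1,t+1}^n)$ or its mirror $(\mathbf{Y}_1^{t-1},\mathbf{Y}_{2,t+1}^n)$. The bound $d_k^*(S_{k,t}|\mathbf{Y}_2^n)\ge d_k^*(S_{k,t}|U_{k,t},Y_{2k,t})$ for $k\in\mathcal{B}$ then needs the missing part of $\mathbf{Y}_2^n$ to be screened off by the corresponding part of $\mathbf{Y}_1^n$. Your degraded-surrogate remedy supplies this only on the $\mathcal{B}$ coordinates. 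On the $\mathcal{A}$ coordinates $Y_{2k}$ is the \emph{stronger} output: it carries information about $M$, hence about $X_{k,t}$ and $S_{k,t}$ for $k\in\mathcal{B}$, beyond what $Y_{1k}$ carries, so the chain fails. The argument therefore stops exactly at the obstacle you flagged.

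The missing idea is to give the communication Rx the sensing output too: $nR-n\epsilon_n\le I(\mathbf{X}^n;\mathbf{O}^n)$ with $\mathbf{O}_t=(\mathbf{Y}_{1,t},\mathbf{Y}_{2,t})$. This genie receiver is globally stronger than $\mathbf{Y}_2$, which is a component of $\mathbf{O}$. The Csisz\'ar sum identity then gives $nR-n\epsilon_n\le\sum_t[I(\mathbf{X}_t;\mathbf{O}_t|\mathbf{W}_t)+I(\mathbf{W}_t;\mathbf{Y}_{2,t})]$ with the message-free auxiliary $\mathbf{W}_t=(\mathbf{O}^{t-1},\mathbf{Y}_{2,t+1}^n)$. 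Because $(\mathbf{W}_t,\mathbf{Y}_{2,t})$ contains all of $\mathbf{Y}_2^n$, the distortion lower bound follows just from the fact that conditioning does not increase $d^*$; no Markov chain through past or future outputs is needed.

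Per-component degradedness is then used only at the single-letter level:
given the stronger outputs $(\mathbf{Y}_{2\mathcal{A}},\mathbf{Y}_{1\mathcal{B}})$, the weaker ones add nothing. $\mathbf{Y}_{2\mathcal{A}}$ is absorbed into $\bar{\mathbf{W}}=(\mathbf{W},\mathbf{Y}_{2\mathcal{A}})$, which produces $\sum_{k\in\mathcal{A}}I(X_k;Y_{2k})$. A second Csisz\'ar sum over the component index within $\mathcal{B}$ defines $U_k=(\bar{\mathbf{W}},(Y_{1j},Y_{2j})_{j\in\mathcal{B},j<k},(Y_{2j})_{j\in\mathcal{B},j>k})$ and yields $\sum_{k\in\mathcal{B}}[I(X_k;Y_{1k}|U_k)+I(U_k;Y_{2k})]$. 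Each $(U_{k,t},Y_{2k,t})$ again contains all of $\mathbf{Y}_2^n$, so the $\mathcal{B}$ distortion terms come for free. For $k\in\mathcal{A}$, your bound through $d_k^*(S_{k,t}|X_{k,t},Y_{2k,t})$ is exactly what the paper uses.
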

		\begin{remark}\label{rem:par-special}
			Theorem~\ref{thm:parallel} includes as special cases the two scenarios in which all component channels have the same direction of degradation. If $\mathcal{B}=\emptyset$, then by~\eqref{eq:par-degraded-a} and the data processing inequality, $I(X_k;Y_{1k})\le I(X_k;Y_{2k})$; hence, the minimum in~\eqref{eq:par-thm} is attained by the first term, and thus
			\begin{equation*}
				C(D)=\max_{\prod_{k=1}^{K}p_{X_k}}\Bigg\{\sum_{k=1}^{K}I(X_k;Y_{1k}):\sum_{k=1}^{K}d_k^*(S_k|X_k,Y_{2k})\le D\Bigg\}.
			\end{equation*}
			In this case, the communication rate and the sensing distortion are coupled only through the input distribution, which is consistent with the decoupling structure in~\cite[Corollary~1]{Chen2026TCOM}. If $\mathcal{A}=\emptyset$, then by~\eqref{eq:par-degraded-b}, $I(U_k;Y_{2k})\le I(U_k;Y_{1k})$; hence, the minimum is attained by the second term, which yields the superposition coding structure corresponding to~\cite[Corollary~2]{Chen2026TCOM}.
			
			In the mixed-degradation case, on each component channel in $\mathcal{A}$, the sensing Rx can obtain the corresponding channel input $X_k$ through full decoding, and the resulting minimum expected distortion is $d_k^*(S_k|X_k,Y_{2k})$; on each component channel in $\mathcal{B}$, the sensing Rx obtains the auxiliary random variable $U_k$ through partial decoding, and the resulting minimum expected distortion is $d_k^*(S_k|U_k,Y_{2k})$. The communication rates over the two types of component channels are jointly constrained by the two sum mutual information terms in~\eqref{eq:par-thm}.
		\end{remark}
		\begin{remark}\label{rem:par-gain}
			Consider a scheme that encodes separately on each component channel, using single-layer coding on the component channels in $\mathcal{A}$ and superposition coding on those in $\mathcal{B}$. For a given product distribution, this scheme achieves the same distortion, and its achievable rate is
			\begin{equation*}
				R_{\mathrm{sep}}
				=\sum_{k\in\mathcal A}I(X_k;Y_{1k})
				+\sum_{k\in\mathcal B}
				\big[I(X_k;Y_{1k}|U_k)+I(U_k;Y_{2k})\big].
			\end{equation*}
			In contrast, the objective function in~\eqref{eq:par-thm} can be written as
			\begin{equation*}
				R_{\mathrm{joint}}
				=R_{\mathrm{sep}}+\min\{\Delta_{\mathcal A},\Delta_{\mathcal B}\},
			\end{equation*}
			where
			\begin{align*}
				\Delta_{\mathcal A}
				&:=\sum_{k\in\mathcal A}
				\big[I(X_k;Y_{2k})-I(X_k;Y_{1k})\big],\\
				\Delta_{\mathcal B}
				&:=\sum_{k\in\mathcal B}
				\big[I(U_k;Y_{1k})-I(U_k;Y_{2k})\big].
			\end{align*}
			By~\eqref{eq:par-degraded} and the data processing inequality, $\Delta_{\mathcal{A}},\Delta_{\mathcal{B}}\ge0$; therefore, $R_{\mathrm{joint}}\ge R_{\mathrm{sep}}$, with strict inequality whenever both $\Delta_{\mathcal{A}}>0$ and $\Delta_{\mathcal{B}}>0$. Intuitively, under separate coding, the surplus decoding capability $\Delta_{\mathcal{A}}$ of the sensing Rx on the component channels in $\mathcal{A}$ is left unused, whereas the cloud-layer rate on the component channels in $\mathcal{B}$ is limited by the weaker sensing channel, which incurs a loss of $\Delta_{\mathcal{B}}$. By jointly encoding the common layer across all component channels, the surplus of the former can compensate for the loss of the latter.
		\end{remark}

		\subsection{Achievability}\label{subsec:par-ach}
		
		The achievability proof is based on superposition coding~\cite[Sec.~5.3]{ElGamalKim2011}, in which the common layer is jointly encoded across all component channels. Fix a distribution $\prod_{k\in\mathcal{A}}p_{X_k}\prod_{k\in\mathcal{B}}p_{U_kX_k}$. For notational convenience, let $U_k:=X_k$ for $k\in\mathcal{A}$, and denote the common layer by $\mathbf{U}:=(U_1,\ldots,U_K)$; we also write $\mathbf{X}_{\mathcal{B}}:=(X_k)_{k\in\mathcal{B}}$ and $\mathbf{Y}_{1\mathcal{B}}:=(Y_{1k})_{k\in\mathcal{B}}$.
		
		\textit{Codebook generation:} Split the message $M$ into independent common and private messages $M_0\in[1:2^{nR_u}]$ and $M_1\in[1:2^{nR_v}]$, where $R=R_u+R_v$. For each $m_0$, independently generate a common-layer codeword $\mathbf{u}^n(m_0)$ according to $\prod_{t=1}^{n}p_{\mathbf{U}}(\mathbf{u}_t)$, where $p_{\mathbf{U}}=\prod_{k=1}^{K}p_{U_k}$. For each $(m_0,m_1)$, conditionally independently generate a private-layer codeword $\mathbf{x}_{\mathcal{B}}^n(m_0,m_1)$ according to $\prod_{t=1}^{n}\prod_{k\in\mathcal{B}}p_{X_k|U_k}\big(x_{k,t}|u_{k,t}(m_0)\big)$.
		
		\textit{Encoding:} To send $(m_0,m_1)$, the Tx transmits $\mathbf{u}_{\mathcal{A}}^n(m_0)$ over the component channels in $\mathcal{A}$ and $\mathbf{x}_{\mathcal{B}}^n(m_0,m_1)$ over those in $\mathcal{B}$.
		
		\textit{Decoding:} The sensing Rx finds the unique $\hat{m}_0$ such that $\big(\mathbf{u}^n(\hat{m}_0),\mathbf{y}_2^n\big)\in\mathcal{T}_\epsilon^{(n)}$. The communication Rx performs successive decoding: it first decodes $\hat{m}_0$ from $\mathbf{y}_1^n$ in the same manner, and then finds the unique $\hat{m}_1$ such that $\big(\mathbf{u}_{\mathcal{B}}^n(\hat{m}_0),\mathbf{x}_{\mathcal{B}}^n(\hat{m}_0,\hat{m}_1),\mathbf{y}_{1\mathcal{B}}^n\big)\in\mathcal{T}_\epsilon^{(n)}$.
		
		\textit{State estimation:} The sensing Rx estimates the state symbol by symbol using the decoded common-layer codeword, i.e.,
		\begin{equation*}
			\hat{S}_{k,t}=\hat{s}_k^*\big(u_{k,t}(\hat{m}_0),Y_{2k,t}\big),\quad k\in[1:K],\ t\in[1:n].
		\end{equation*}
		
		\textit{Analysis of the probability of error:} By the standard analysis of superposition coding~\cite[Sec.~5.3]{ElGamalKim2011} and the product structure of the channel, the decoding error probabilities at both receivers tend to zero as $n\to\infty$ if
		\begin{align}
			R_u&<\sum_{k=1}^{K}I(U_k;Y_{2k})-\delta(\epsilon),\label{eq:par-ach-sen}\\
			R_u&<\sum_{k=1}^{K}I(U_k;Y_{1k})-\delta(\epsilon),\label{eq:par-ach-com0}\\
			R_v&<\sum_{k\in\mathcal{B}}I(X_k;Y_{1k}|U_k)-\delta(\epsilon),\label{eq:par-ach-com1}
		\end{align}
		where $\delta(\epsilon)\to0$ as $\epsilon\to0$. A constraint among~\eqref{eq:par-ach-sen}--\eqref{eq:par-ach-com1} is imposed only when the rate of the corresponding message layer is positive; if a layer has zero rate, its message set is a singleton and no decoding is required for that layer. Since $U_k\leftrightarrow X_k\leftrightarrow Y_{1k}$, we have $I(U_k;Y_{1k})+I(X_k;Y_{1k}|U_k)=I(X_k;Y_{1k})$. Therefore, for any $R$ satisfying
		\begin{align}
			R<\min\Bigg\{
			&\sum_{k=1}^{K}I(X_k;Y_{1k}),
			\sum_{k\in\mathcal{A}}I(X_k;Y_{2k})\notag\\
			&+\sum_{k\in\mathcal{B}}\big[I(X_k;Y_{1k}|U_k)+I(U_k;Y_{2k})\big]
			\Bigg\}-2\delta(\epsilon),
			\label{eq:par-ach-rate}
		\end{align}
		one can choose $R_u,R_v\ge0$ such that~\eqref{eq:par-ach-sen}--\eqref{eq:par-ach-com1} hold.
		
		\textit{Analysis of the distortion:} Let $\mathcal{E}_0:=\{\hat{M}_0\neq M_0\}$. On $\mathcal{E}_0^c$, the estimate at the sensing Rx coincides with the genie-aided estimate $\hat{s}_k^*(U_{k,t},Y_{2k,t})$ that knows the common message. Averaged over the random codebook, the codebook generation and the product structure of the channel imply that $(U_{k,t},Y_{2k,t},S_{k,t})\sim p_{U_kY_{2k}S_k}$; hence, the average distortion of the genie-aided estimate is
		$\bar{D}:=\sum_{k=1}^{K}d_k^*(S_k|U_k,Y_{2k})\le D$.
		Since $d_{\max}<\infty$, we have
		$\mathbb{E}_{\mathcal{C}}\big[D^{(n)}(\mathcal{C})\big]
			\le \bar{D}+d_{\max}\,\mathbb{E}_{\mathcal{C}}\big[\mathbb{P}(\mathcal{E}_0|\mathcal{C})\big]$,
		where $\mathcal{C}$ denotes the random codebook, and the second term on the right-hand side tends to zero as $n\to\infty$ by~\eqref{eq:par-ach-sen}.
		
		Finally, let $P_e(\mathcal{C})$ denote the communication decoding error probability of codebook $\mathcal{C}$, and choose $\lambda_n\to\infty$ such that $\lambda_n\mathbb{E}_{\mathcal{C}}[P_e(\mathcal{C})]\to0$. Then there exists a codebook $\mathcal{C}_n$ such that $D^{(n)}(\mathcal{C}_n)+\lambda_nP_e(\mathcal{C}_n)\le\mathbb{E}_{\mathcal{C}}\big[D^{(n)}(\mathcal{C})+\lambda_nP_e(\mathcal{C})\big]$, which implies that $P_e(\mathcal{C}_n)\to0$ and $\limsup_{n\to\infty}D^{(n)}(\mathcal{C}_n)\le\bar{D}\le D$. Letting $\epsilon\to0$ completes the proof of the achievability part of Theorem~\ref{thm:parallel}.
		
		\subsection{Converse}
		
		By Fano's inequality~\cite{ElGamalKim2011}, we have
		\begin{align}\label{eq:par-conv-fano}
			nR-n\epsilon_n
			\leq I(M;\mathbf{Y}_1^n)\overset{(a)}{\leq} I(\mathbf{X}^n;\mathbf{Y}_1^n)\overset{(b)}{\leq}\sum_{t=1}^{n}I(\mathbf{X}_t;\mathbf{Y}_{1,t}),
		\end{align}
		where $(a)$ follows from the data processing inequality and $(b)$ follows from the memorylessness of the channel~\eqref{eq:model-memoryless}.
		
		On the other hand, let $\mathbf O_t:=(\mathbf Y_{1,t},\mathbf Y_{2,t})$ and write $\mathbf O^n=(\mathbf O_1,\ldots,\mathbf O_n)$. Define $\mathbf W_t:=(\mathbf O^{t-1},\mathbf Y_{2,t+1}^n)$. By Fano's inequality, we have
		\begin{align}
			nR-n\epsilon_n&\le I(\mathbf{X}^n;\mathbf{O}^n)\notag\\
			&=\sum_{t=1}^{n}I(\mathbf{X}^n;\mathbf{O}_t|\mathbf{O}^{t-1})\notag\\
			&\overset{(a)}{=}\sum_{t=1}^{n}I(\mathbf{X}_t,\mathbf{Y}_{2,t+1}^n;\mathbf{O}_t|\mathbf{O}^{t-1})\notag\\
			&=\sum_{t=1}^{n}\big[I(\mathbf{X}_t;\mathbf{O}_t|\mathbf{W}_t)+I(\mathbf{Y}_{2,t+1}^n;\mathbf{O}_t|\mathbf{O}^{t-1})\big]\notag\\
			&\overset{(b)}{=}\sum_{t=1}^{n}\big[I(\mathbf{X}_t;\mathbf{O}_t|\mathbf{W}_t)+I(\mathbf{O}^{t-1};\mathbf{Y}_{2,t}|\mathbf{Y}_{2,t+1}^n)\big]\notag\\
			&\overset{(c)}{\le}\sum_{t=1}^{n}\big[I(\mathbf{X}_t;\mathbf{O}_t|\mathbf{W}_t)+I(\mathbf{W}_t;\mathbf{Y}_{2,t})\big],\label{eq:par-conv-temporal}
		\end{align}
		where $(a)$ follows from the Markov chain $(\mathbf{X}^{n\setminus t},\mathbf{O}^{t-1},\mathbf{Y}_{2,t+1}^n)\leftrightarrow\mathbf{X}_t\leftrightarrow(\mathbf{O}_t,\mathbf{S}_t)$, $(b)$ follows from the Csisz\'ar sum identity, and $(c)$ follows from $I(\mathbf{O}^{t-1};\mathbf{Y}_{2,t}|\mathbf{Y}_{2,t+1}^n)=I(\mathbf{W}_t;\mathbf{Y}_{2,t})-I(\mathbf{Y}_{2,t+1}^n;\mathbf{Y}_{2,t})\le I(\mathbf{W}_t;\mathbf{Y}_{2,t})$.
		
		We introduce a time-sharing random variable $Q\sim\mathrm{Unif}[1:n]$, independent of $(\mathbf X^n,\mathbf S^n,\mathbf Y_1^n,\mathbf Y_2^n)$, and define $\mathbf W:=(Q,\mathbf W_Q)$, $\mathbf X:=\mathbf X_Q$, $\mathbf Y_i:=\mathbf Y_{i,Q}$ ($i=1,2$), $\mathbf S:=\mathbf S_Q$, and $\mathbf O:=\mathbf O_Q$.
		
		From~\eqref{eq:par-conv-fano}, we obtain
		\begin{align}\label{eq:par-conv-rate1}
			R-\epsilon_n
			&\leq I(\mathbf X;\mathbf Y_1|Q)\nonumber\\
			&\leq I(\mathbf X;\mathbf Y_1)\nonumber\\
			&\overset{(a)}{\leq}\sum_{k=1}^{K}I(X_k;Y_{1k}),
		\end{align}
		where $(a)$ follows from the product structure~\eqref{eq:par-channel}.
		
		From~\eqref{eq:par-conv-temporal}, we obtain
		\begin{align}\label{eq:par-conv-single}
			R-\epsilon_n
			&\leq I(\mathbf X;\mathbf O|\mathbf W_Q,Q)+I(\mathbf W_Q;\mathbf Y_2|Q)\nonumber\\
			&=I(\mathbf X;\mathbf O|\mathbf W)+I(\mathbf W_Q;\mathbf Y_2|Q)\nonumber\\
			&\overset{(a)}{\leq}I(\mathbf X;\mathbf O|\mathbf W)+I(\mathbf W;\mathbf Y_2),
		\end{align}
		where $(a)$ follows from $I(\mathbf W_Q;\mathbf Y_2|Q)\leq I(Q,\mathbf W_Q;\mathbf Y_2)=I(\mathbf W;\mathbf Y_2)$.
		
		Next, we simplify~\eqref{eq:par-conv-single} by exploiting the degradedness. For $i=1,2$, let $\mathbf{Y}_{i\mathcal{A}}:=(Y_{ik})_{k\in\mathcal{A}}$ and $\mathbf{Y}_{i\mathcal{B}}:=(Y_{ik})_{k\in\mathcal{B}}$, and define $\mathbf{X}_{\mathcal{A}}$ and $\mathbf{X}_{\mathcal{B}}$ similarly. Let $\mathbf{O}^+:=(\mathbf{Y}_{2\mathcal{A}},\mathbf{Y}_{1\mathcal{B}})$ and $\mathbf{O}^-:=(\mathbf{Y}_{1\mathcal{A}},\mathbf{Y}_{2\mathcal{B}})$ denote the stronger and weaker outputs of the component channels, respectively, and let $\bar{\mathbf{W}}:=(\mathbf{W},\mathbf{Y}_{2\mathcal{A}})$. Then
		\begin{align}\label{eq:par-conv-peel}
			&I(\mathbf{X};\mathbf{O}|\mathbf{W})+I(\mathbf{W};\mathbf{Y}_2)\notag\\
			&=I(\mathbf{X};\mathbf{O}^+|\mathbf{W})+I(\mathbf{X};\mathbf{O}^-|\mathbf{W},\mathbf{O}^+)+I(\mathbf{W};\mathbf{Y}_2)\notag\\
			&\overset{(a)}{=}I(\mathbf{X};\mathbf{Y}_{2\mathcal{A}},\mathbf{Y}_{1\mathcal{B}}|\mathbf{W})+I(\mathbf{W};\mathbf{Y}_{2\mathcal{A}},\mathbf{Y}_{2\mathcal{B}})\notag\\
			&=I(\mathbf{X},\mathbf{W};\mathbf{Y}_{2\mathcal{A}})+I(\mathbf{X};\mathbf{Y}_{1\mathcal{B}}|\bar{\mathbf{W}})+I(\mathbf{W};\mathbf{Y}_{2\mathcal{B}}|\mathbf{Y}_{2\mathcal{A}})\notag\\
			&\overset{(b)}{=}I(\mathbf{X}_{\mathcal{A}};\mathbf{Y}_{2\mathcal{A}})+I(\mathbf{X}_{\mathcal{B}};\mathbf{Y}_{1\mathcal{B}}|\bar{\mathbf{W}})+I(\mathbf{W};\mathbf{Y}_{2\mathcal{B}}|\mathbf{Y}_{2\mathcal{A}})\notag\\
			&\overset{(c)}{\le}I(\mathbf{X}_{\mathcal{A}};\mathbf{Y}_{2\mathcal{A}})
			+\underbrace{I(\mathbf{X}_{\mathcal{B}};\mathbf{Y}_{1\mathcal{B}}|\bar{\mathbf{W}})+I(\bar{\mathbf{W}};\mathbf{Y}_{2\mathcal{B}})}_{\Psi_{\mathcal{B}}}\notag\\
			&{\le}\sum_{k\in\mathcal{A}}I(X_k;Y_{2k})+\Psi_{\mathcal{B}},
		\end{align}
		where $(a)$ holds because, by the product structure~\eqref{eq:par-channel} of the channel and the degradedness condition~\eqref{eq:par-degraded},
		\begin{equation*}
			p_{\mathbf{O}^-|\mathbf{X}\mathbf{O}^+}
			=\prod_{k\in\mathcal{A}}p_{Y_{1k}|Y_{2k}}\prod_{k\in\mathcal{B}}p_{Y_{2k}|Y_{1k}}
			=p_{\mathbf{O}^-|\mathbf{O}^+},
		\end{equation*}
		which, together with $\mathbf{W}\leftrightarrow\mathbf{X}\leftrightarrow\mathbf{O}$, yields the Markov chain $(\mathbf{W},\mathbf{X})\leftrightarrow\mathbf{O}^+\leftrightarrow\mathbf{O}^-$, and hence $I(\mathbf{X};\mathbf{O}^-|\mathbf{W},\mathbf{O}^+)=0$; $(b)$ follows from the product structure of the channel and the Markov chains $(\mathbf{W},\mathbf{X}_{\mathcal{B}})\leftrightarrow\mathbf{X}_{\mathcal{A}}\leftrightarrow\mathbf{Y}_{2\mathcal{A}}$ and $(\bar{\mathbf{W}},\mathbf{X}_{\mathcal{A}})\leftrightarrow\mathbf{X}_{\mathcal{B}}\leftrightarrow\mathbf{Y}_{1\mathcal{B}}$, which are implied by $\mathbf{W}\leftrightarrow\mathbf{X}\leftrightarrow\mathbf{O}$; and $(c)$ follows since $I(\mathbf{W};\mathbf{Y}_{2\mathcal{B}}|\mathbf{Y}_{2\mathcal{A}})=I(\bar{\mathbf{W}};\mathbf{Y}_{2\mathcal{B}})-I(\mathbf{Y}_{2\mathcal{A}};\mathbf{Y}_{2\mathcal{B}})\le I(\bar{\mathbf{W}};\mathbf{Y}_{2\mathcal{B}})$.
		
		Next, we bound $\Psi_{\mathcal{B}}$ component by component. For $k\in\mathcal{B}$, let $\widetilde{\mathbf{O}}_k:=(Y_{1k},Y_{2k})$, $\widetilde{\mathbf{O}}_{\mathcal{B}}:=(\widetilde{\mathbf{O}}_k)_{k\in\mathcal{B}}$, $\widetilde{\mathbf{O}}_{\mathcal{B}}^{<k}:=(\widetilde{\mathbf{O}}_j)_{j\in\mathcal{B},\,j<k}$, $\mathbf{Y}_{2\mathcal{B}}^{>k}:=(Y_{2j})_{j\in\mathcal{B},\,j>k}$, and $U_k:=(\bar{\mathbf{W}},\widetilde{\mathbf{O}}_{\mathcal{B}}^{<k},\mathbf{Y}_{2\mathcal{B}}^{>k})$. Then
		\begin{align}\label{eq:par-conv-spatial}
			\Psi_{\mathcal{B}}
			&=I(\mathbf{X}_{\mathcal{B}};\mathbf{Y}_{1\mathcal{B}}|\bar{\mathbf{W}})+I(\bar{\mathbf{W}};\mathbf{Y}_{2\mathcal{B}})\notag\\
			&\overset{(a)}{\le}I(\mathbf{X}_{\mathcal{B}};\widetilde{\mathbf{O}}_{\mathcal{B}}|\bar{\mathbf{W}})+I(\bar{\mathbf{W}};\mathbf{Y}_{2\mathcal{B}})\notag\\
			&\overset{(b)}{=}\sum_{k\in\mathcal{B}}\big[I(\mathbf{X}_{\mathcal{B}};\widetilde{\mathbf{O}}_k|\bar{\mathbf{W}},\widetilde{\mathbf{O}}_{\mathcal{B}}^{<k})+I(\bar{\mathbf{W}};Y_{2k}|\mathbf{Y}_{2\mathcal{B}}^{>k})\big]\notag\\
			&\le\sum_{k\in\mathcal{B}}\big[I(\mathbf{X}_{\mathcal{B}},\mathbf{Y}_{2\mathcal{B}}^{>k};\widetilde{\mathbf{O}}_k|\bar{\mathbf{W}},\widetilde{\mathbf{O}}_{\mathcal{B}}^{<k})+I(\bar{\mathbf{W}};Y_{2k}|\mathbf{Y}_{2\mathcal{B}}^{>k})\big]\notag\\
			&=\sum_{k\in\mathcal{B}}\big[I(\mathbf{X}_{\mathcal{B}};\widetilde{\mathbf{O}}_k|\bar{\mathbf{W}},\widetilde{\mathbf{O}}_{\mathcal{B}}^{<k},\mathbf{Y}_{2\mathcal{B}}^{>k})\notag\\
			&\qquad+I(\mathbf{Y}_{2\mathcal{B}}^{>k};\widetilde{\mathbf{O}}_k|\bar{\mathbf{W}},\widetilde{\mathbf{O}}_{\mathcal{B}}^{<k})+I(\bar{\mathbf{W}};Y_{2k}|\mathbf{Y}_{2\mathcal{B}}^{>k})\big]\notag\\
			&\overset{(c)}{=}\sum_{k\in\mathcal{B}}\big[I(\mathbf{X}_{\mathcal{B}};\widetilde{\mathbf{O}}_k|\bar{\mathbf{W}},\widetilde{\mathbf{O}}_{\mathcal{B}}^{<k},\mathbf{Y}_{2\mathcal{B}}^{>k})\notag\\
			&\qquad+I(\widetilde{\mathbf{O}}_{\mathcal{B}}^{<k};Y_{2k}|\bar{\mathbf{W}},\mathbf{Y}_{2\mathcal{B}}^{>k})+I(\bar{\mathbf{W}};Y_{2k}|\mathbf{Y}_{2\mathcal{B}}^{>k})\big]\notag\\
			&\le\sum_{k\in\mathcal{B}}\big[I(\mathbf{X}_{\mathcal{B}};\widetilde{\mathbf{O}}_k|U_k)+I(U_k;Y_{2k})\big]\notag\\
			&\overset{(d)}{=}\sum_{k\in\mathcal{B}}\big[I(X_k;Y_{1k}|U_k)+I(U_k;Y_{2k})\big],
		\end{align}
		where $(a)$ follows by adding $\mathbf{Y}_{2\mathcal{B}}$ to the observation; $(b)$ follows from the chain rule, with the first term expanded in increasing order of $k$ and the second term in decreasing order of $k$; $(c)$ follows from the Csisz\'ar sum identity; and $(d)$ holds because, by the product structure~\eqref{eq:par-channel} and $\mathbf{W}\leftrightarrow\mathbf{X}\leftrightarrow(\mathbf{Y}_1,\mathbf{Y}_2,\mathbf{S})$, we have $(U_k,\mathbf{X}^{K\setminus k})\leftrightarrow X_k\leftrightarrow(Y_{1k},Y_{2k},S_k)$, which, combined with $X_k\leftrightarrow Y_{1k}\leftrightarrow Y_{2k}$, yields $I(\mathbf{X}_{\mathcal{B}};\widetilde{\mathbf{O}}_k|U_k)=I(X_k;\widetilde{\mathbf{O}}_k|U_k)=I(X_k;Y_{1k}|U_k)$.
		
		Combining~\eqref{eq:par-conv-single}, \eqref{eq:par-conv-peel}, and~\eqref{eq:par-conv-spatial}, we obtain
		\begin{equation}\label{eq:par-conv-rate2}
			R-\epsilon_n\leq\sum_{k\in\mathcal{A}}I(X_k;Y_{2k})+\sum_{k\in\mathcal{B}}\big[I(X_k;Y_{1k}|U_k)+I(U_k;Y_{2k})\big].
		\end{equation}
		
		We now consider the distortion constraint. For $k\in\mathcal{B}$ and $t\in[1:n]$, let $U_{k,t}:=(\mathbf{W}_t,\mathbf{Y}_{2\mathcal{A},t},\widetilde{\mathbf{O}}_{\mathcal{B},t}^{<k},\mathbf{Y}_{2\mathcal{B},t}^{>k})$, where $\widetilde{\mathbf{O}}_{\mathcal{B},t}^{<k}$ and $\mathbf{Y}_{2\mathcal{B},t}^{>k}$ denote the components of $\widetilde{\mathbf{O}}_{\mathcal{B}}^{<k}$ and $\mathbf{Y}_{2\mathcal{B}}^{>k}$ in the $t$th channel use, respectively. Accordingly, the auxiliary random variable in~\eqref{eq:par-conv-spatial} is $U_k=(Q,U_{k,Q})$. Since the estimate $\hat{\mathbf{S}}_t$ depends only on $\mathbf{Y}_2^n$ and conditioning does not increase the minimum conditional expected distortion, we have, for $k\in\mathcal{B}$,
		\begin{align}\label{eq:conv-dist-B}
			\mathbb{E}\big[d_k(S_{k,t},\hat{S}_{k,t})\big]
			&\ge d_k^*(S_{k,t}|\mathbf{Y}_2^n)\notag\\
			&\ge d_k^*\big(S_{k,t}\,\big|\,\mathbf{Y}_1^{t-1},\mathbf{Y}_2^n,\mathbf{Y}_{1\mathcal{B},t}^{<k}\big)\notag\\
			&=d_k^*(S_{k,t}|U_{k,t},Y_{2k,t}),
		\end{align}
		where $\mathbf{Y}_{1\mathcal{B},t}^{<k}:=(Y_{1j,t})_{j\in\mathcal{B},\,j<k}$, and the last equality holds because $(\mathbf{Y}_1^{t-1},\mathbf{Y}_2^n,\mathbf{Y}_{1\mathcal{B},t}^{<k})$ and $(U_{k,t},Y_{2k,t})$ consist of the same random variables. Similarly, for $k\in\mathcal{A}$,
		\begin{align}\label{eq:conv-dist-A}
			\mathbb{E}\big[d_k(S_{k,t},\hat{S}_{k,t})\big]
			&\ge d_k^*(S_{k,t}|\mathbf{X}^n,\mathbf{Y}_2^n)\notag\\
			&=d_k^*(S_{k,t}|X_{k,t},Y_{2k,t}),
		\end{align}
		where the equality follows from the Markov chain $S_{k,t}\leftrightarrow(X_{k,t},Y_{2k,t})\leftrightarrow(\mathbf{X}^n,\mathbf{Y}_2^n)$. Therefore,
		\begin{align}\label{eq:conv-dist}
			D+\epsilon
			&\ge\frac{1}{n}\sum_{t=1}^{n}\sum_{k=1}^{K}\mathbb{E}\big[d_k(S_{k,t},\hat{S}_{k,t})\big]\notag\\
			&\ge\frac{1}{n}\sum_{t=1}^{n}\bigg[\sum_{k\in\mathcal{A}}d_k^*(S_{k,t}|X_{k,t},Y_{2k,t})\notag\\
			&\qquad+\sum_{k\in\mathcal{B}}d_k^*(S_{k,t}|U_{k,t},Y_{2k,t})\bigg]\notag\\
			&=\sum_{k\in\mathcal{A}}d_k^*(S_k|X_k,Y_{2k})+\sum_{k\in\mathcal{B}}d_k^*(S_k|U_k,Y_{2k}),
		\end{align}
		where the last equality follows from $U_k=(Q,U_{k,Q})$ and the Markov chain $Q\leftrightarrow(X_k,Y_{2k})\leftrightarrow S_k$, with $S_k:=S_{k,Q}$, $X_k:=X_{k,Q}$, and $Y_{2k}:=Y_{2k,Q}$.
		
		Combining~\eqref{eq:par-conv-rate1}, \eqref{eq:par-conv-rate2}, and~\eqref{eq:conv-dist}, we conclude that $R-\epsilon_n$ is upper-bounded by the maximum of the right-hand side of Theorem~\ref{thm:parallel} under the distortion constraint $D+\epsilon$. Note that the resulting inputs on different component channels may be correlated. However, since the rate expressions and the distortion constraint depend only on the marginal distributions associated with the individual component channels, the joint distribution can be replaced by the product of these marginals without changing the corresponding rates and distortion. Finally, letting $n\to\infty$ and $\epsilon\to0$ completes the proof of the converse.
		\begin{remark}\label{rem:par-seq}
			Theorem~\ref{thm:parallel} is established for the symbol-wise distortion~\eqref{eq:model-distortion}. If the distortion is replaced by the sequence-wise logarithmic loss in~\cite{Chen2026TCOM}, whose minimum is $\frac{1}{n}H(\mathbf{S}^n|\mathbf{Y}_2^n)$, the chain rule introduces past states into the conditioning, so that $\mathbf{S}_{\mathcal{B},t}$ and $S_k$ have to be included in $\mathbf{O}_t$ and $\widetilde{\mathbf{O}}_k$, respectively. Removing these states in~\eqref{eq:par-conv-peel} and~\eqref{eq:par-conv-spatial} then requires the stronger Markov chain $X_k\leftrightarrow Y_{1k}\leftrightarrow(Y_{2k},S_k)$ for $k\in\mathcal{B}$, whereas no additional condition is needed for $k\in\mathcal{A}$. Under this condition, the resulting $C(D)$ coincides with that under the symbol-wise logarithmic loss, i.e., Theorem~\ref{thm:parallel} with $d_k^*(S_k|\cdot)$ replaced by $H(S_k|\cdot)$, in agreement with~\cite[Corollary~2 and Remark~5]{Chen2026TCOM}.
		\end{remark}

		\subsection{A Special Gaussian Case}\label{subsec:par-gauss}
		
		Consider a parallel bistatic ISAC channel with additive Gaussian states, whose $k$th component channel is given by
		\begin{align}
			Y_{1k}&=X_k+Z_{1k},\\
			Y_{2k}&=X_k+Z_{2k}+S_k,
		\end{align}
		where $Z_{1k}\sim\mathcal{N}(0,N_{1k})$, $Z_{2k}\sim\mathcal{N}(0,N_{2k})$, and $S_k\sim\mathcal{N}(0,N_{Sk})$ are mutually independent and independent of the channel input; moreover, the noises and states of different component channels are mutually independent.
		
		The sensing distortion is measured by the squared error $d_k(s_k,\hat{s}_k)=(s_k-\hat{s}_k)^2$, in which case the minimum conditional expected distortion in~\eqref{eq:model-dstar} becomes
		\begin{equation}\label{eq:par-g-var}
			d_k^*(S_k|{\Theta})=\mathrm{var}(S_k|{\Theta}).
		\end{equation}
		In addition, Definition~\ref{def:achievable} is modified to further impose the input power constraint
		\begin{equation}\label{eq:par-g-power}
			\frac{1}{n}\sum_{t=1}^{n}\sum_{k=1}^{K}\mathbb{E}\big[X_{k,t}^2\big]\le P+\epsilon,
		\end{equation}
		and the supremum of the corresponding achievable rates, denoted by $C(D,P)$, is referred to as the capacity-distortion-power function.
		
		By Remark~\ref{rem:marginal}, stochastic degradedness and physical degradedness lead to the same $C(D)$. Therefore, for the Gaussian channel under consideration, the direction of degradation can be determined simply by comparing the effective noise variances at the communication Rx and the sensing Rx. Specifically,
		\begin{align*}
			\mathcal A&=\{k:N_{1k}\geq N_{2k}+N_{Sk}\},\\
			\mathcal B&=\{k:N_{1k}<N_{2k}+N_{Sk}\}.
		\end{align*}
		
		By extending Theorem~\ref{thm:parallel} to the Gaussian case,\footnote{The achievability result for the discrete case can be extended to the Gaussian case via the standard discretization argument, and the converse can be established using the weak convergence argument in~\cite{Geng2014IT}.} we obtain
		\begin{subequations}\label{eq:par-g-opt}
			\begin{align}
				C(D,P)=\sup\;&\min\{R_1,R_2\}\label{eq:par-g-opt-a}\\
				\text{s.t.}\;&\sum_{k\in\mathcal{A}}\mathrm{var}(S_k|X_k,Y_{2k})\notag\\
				&\quad+\sum_{k\in\mathcal{B}}\mathrm{var}(S_k|U_k,Y_{2k})\le D,\label{eq:par-g-opt-b}\\
				&\sum_{k=1}^{K}\mathbb{E}[X_k^2]\le P.\label{eq:par-g-opt-c}
			\end{align}
		\end{subequations}

		By the optimality result for scalar Gaussian channels in~\cite[Theorem~5]{Chen2026TCOM}, the optimization over input distributions in~\eqref{eq:par-g-opt} can be restricted to Gaussian superposition inputs. For $k\in\mathcal{A}$, the distortion is $\mathrm{var}(S_k|X_k,Y_{2k})=\frac{N_{2k}N_{Sk}}{N_{2k}+N_{Sk}}$. For $k\in\mathcal{B}$, given the power $P_k$ and the distortion $d_k$ of the component channel, the corresponding optimal rate is determined by the above scalar Gaussian result. Therefore, the computation of $C(D,P)$ reduces to the following joint power and distortion allocation problem across component channels:
		\begin{subequations}\label{eq:par-g-convex}
			\begin{align}
				C(D,P)=\sup_{\{P_k,d_k\}}\quad
				&\min\{{R}_1,{R}_2\}\label{eq:par-g-convex-a}\\
				\text{s.t.}\quad
				&\sum_{k\in\mathcal{A}}\frac{N_{2k}N_{Sk}}{N_{2k}+N_{Sk}}+\sum_{k\in\mathcal{B}}d_k\leq D,\label{eq:par-g-convex-b}\\
				&\frac{N_{2k}N_{Sk}}{N_{2k}+N_{Sk}}\leq d_k\leq\frac{(P_k+N_{2k})N_{Sk}}{P_k+N_{2k}+N_{Sk}},\label{eq:par-g-convex-c}\\
				&\sum_{k=1}^{K}P_k\leq P,\qquad P_k\geq0,\label{eq:par-g-convex-d}
			\end{align}
		\end{subequations}
		where
		\begin{align*}
			{R}_1&=\sum_{k=1}^{K}\frac{1}{2}\log\Big(1+\frac{P_k}{N_{1k}}\Big),\\
			{R}_2&=\sum_{k\in\mathcal{A}}\frac{1}{2}\log\Big(1+\frac{P_k}{N_{2k}+N_{Sk}}\Big)\\
			&\quad+\sum_{k\in\mathcal{B}}\frac{1}{2}\log\frac{P_k+N_{2k}+N_{Sk}}{N_{Sk}}\\
			&\quad+\sum_{k\in\mathcal{B}}\frac{1}{2}\log\frac{(N_{1k}-N_{2k})N_{Sk}+(N_{2k}+N_{Sk}-N_{1k})d_k}{N_{1k}N_{Sk}}.
		\end{align*}
		
		Since $R_1$ and $R_2$ are jointly concave in $\{P_k,d_k\}$ and the feasible set of~\eqref{eq:par-g-convex} is convex, \eqref{eq:par-g-convex} can be solved by Lagrangian dual decomposition. Introducing an auxiliary variable $\tau$, \eqref{eq:par-g-convex} is equivalent to maximizing $\tau$ subject to $\tau\le R_1$, $\tau\le R_2$, and~\eqref{eq:par-g-convex-b}--\eqref{eq:par-g-convex-d}. The stationarity condition with respect to $\tau$ implies that the Lagrange multipliers associated with $\tau\le R_1$ and $\tau\le R_2$ are $\theta$ and $1-\theta$ for some $\theta\in[0,1]$. With $\lambda,\mu\ge0$ associated with the power and distortion constraints, the Lagrangian is
		\begin{align}\label{eq:par-wf-lag}
			\mathcal{L}
			&=\theta R_1+(1-\theta)R_2-\lambda\Big(\sum_{k=1}^{K}P_k-P\Big)
			-\mu\Big(\sum_{k=1}^{K}d_k-D\Big),
		\end{align}
		where $d_k$ with $k\in\mathcal{A}$ is also regarded as a variable satisfying $d_k\ge\frac{N_{2k}N_{Sk}}{N_{2k}+N_{Sk}}$. The dual function and the dual problem are
		\begin{equation}\label{eq:par-wf-dual}
			g(\theta,\lambda,\mu)=\max_{\{P_k,d_k\}}\mathcal{L},\quad
			\min_{\theta\in[0,1],\,\lambda\ge0,\,\mu\ge0}g(\theta,\lambda,\mu),
		\end{equation}
		where the maximization is over $P_k\ge0$ and $d_k$ satisfying~\eqref{eq:par-g-convex-c}. Strong duality holds by Slater's condition, and the maximization in~\eqref{eq:par-wf-dual} decomposes across the component channels. Given the per-channel solution $\{P_k^*,d_k^*\}$ derived below, the dual variables $(\theta,\lambda,\mu)$ are obtained by solving the dual problem in~\eqref{eq:par-wf-dual}, e.g., by bisection. For brevity, let $a_k:=N_{2k}+N_{Sk}-N_{1k}$, which is positive for $k\in\mathcal{B}$.
		
		\subsubsection{Optimal Power Allocation}
		The partial derivative of $\mathcal{L}$ with respect to $P_k$ is
		\begin{equation}\label{eq:par-wf-dP}
			\frac{\partial\mathcal{L}}{\partial P_k}
			=\frac{1}{2\ln2}\bigg(\frac{\theta}{P_k+N_{1k}}+\frac{1-\theta}{P_k+N_{2k}+N_{Sk}}\bigg)-\lambda.
		\end{equation}
		Setting~\eqref{eq:par-wf-dP} to zero and combining with $P_k\ge0$, we obtain the water-filling solution
		\begin{equation}\label{eq:par-wf-power}
			P_k^*=\bigg[\frac{\nu_P-N_{1k}-N_{2k}-N_{Sk}+\sqrt{\Delta_k}}{2}\bigg]^+,
		\end{equation}
		where $\nu_P:=\frac{1}{2\lambda\ln2}$ is the power water level and $\Delta_k:=\big(\nu_P-(1-2\theta)a_k\big)^2+4\theta(1-\theta)a_k^2$. For $\theta=1$ and $\theta=0$, \eqref{eq:par-wf-power} reduces to the classical water-filling over $N_{1k}$ and over $N_{2k}+N_{Sk}$, respectively.
		
		\subsubsection{Optimal Distortion Allocation}
		The partial derivative of $\mathcal{L}$ with respect to $d_k$ is
		\begin{equation}\label{eq:par-wf-dd}
			\frac{\partial\mathcal{L}}{\partial d_k}=
			\begin{cases}
				-\mu, & k\in\mathcal{A},\\[1mm]
				\dfrac{(1-\theta)a_k}{2\ln2\,\big[(N_{1k}-N_{2k})N_{Sk}+a_kd_k\big]}-\mu, & k\in\mathcal{B}.
			\end{cases}
		\end{equation}
		Hence, $d_k^*$ attains its lower bound for $k\in\mathcal{A}$, and, for $k\in\mathcal{B}$, setting~\eqref{eq:par-wf-dd} to zero yields
		\begin{equation}\label{eq:par-wf-dist}
			d_k^*=
			\begin{cases}
				\dfrac{N_{2k}N_{Sk}}{N_{2k}+N_{Sk}}, & k\in\mathcal{A},\\[2mm]
				\Pi_k\bigg((1-\theta)\nu_D-\dfrac{(N_{1k}-N_{2k})N_{Sk}}{a_k}\bigg), & k\in\mathcal{B},
			\end{cases}
		\end{equation}
		where $\nu_D:=\frac{1}{2\mu\ln2}$ is the distortion water level and $\Pi_k(\cdot)$ denotes the projection onto the interval in~\eqref{eq:par-g-convex-c}. If the upper bound in~\eqref{eq:par-g-convex-c} is active, component channel $k\in\mathcal{B}$ carries only private information, its weighted rate reduces to $\frac{1}{2}\log(1+\frac{P_k}{N_{1k}})$ regardless of $\theta$, and $P_k^*$ is instead obtained from the cubic equation
		\begin{equation}\label{eq:par-wf-coupled}
			\frac{1}{P_k+N_{1k}}=\frac{1}{\nu_P}+\frac{1}{\nu_D}\cdot\frac{N_{Sk}^2}{(P_k+N_{2k}+N_{Sk})^2}.
		\end{equation}
		
		Fig.~\ref{fig:joint-vs-separate} compares the capacity-distortion performance of joint coding across component channels (Theorem~\ref{thm:parallel}) with that of separate coding (Remark~\ref{rem:par-gain}) for $K=2$, where the two component channels belong to $\mathcal B$ and $\mathcal A$, respectively. It can be seen that joint coding offers a noticeable performance gain, and that the gap between the two schemes gradually decreases as the distortion constraint is relaxed.
		\begin{figure}[t]
			\centering
			\includegraphics[width=0.8\linewidth]{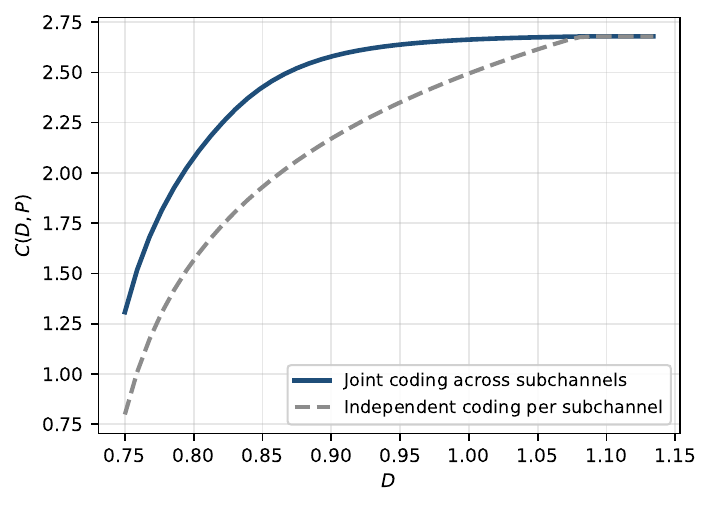}
			\caption{Comparison between joint coding and separate coding for parallel Gaussian ISAC channels under the squared error distortion measure, with $(N_{11},N_{12})=(0.1,5.0)$, $(N_{21},N_{22})=(1.0,0.5)$, $(N_{S1},N_{S2})=(1.0,0.5)$, and $P=4$.}
			\label{fig:joint-vs-separate}
		\end{figure}

		\section{General Vector Channels}\label{sec:vector}
		
		In this section, we consider general vector channels, which are no longer required to have the product structure~\eqref{eq:par-channel} of the parallel channels considered in Section~\ref{sec:parallel}. In systems such as MIMO systems, the components of the input are coupled through the channel, so that degradedness cannot be exploited component by component; moreover, no degradation order in either direction is required between $p_{\mathbf{Y}_1|\mathbf{X}}$ and $p_{\mathbf{Y}_2|\mathbf{X}}$. Section~\ref{subsec:mimo-bounds} derives single-letter lower and upper bounds on $C(D)$ for general vector channels with finite alphabets, and Section~\ref{subsec:mimo-gauss} shows that the lower bound is tight for vector Gaussian channels with additive Gaussian states.
		
		\subsection{Single-Letter Bounds}\label{subsec:mimo-bounds}
		
		\begin{theorem}\label{thm:mimo-ach}
			For general vector ISAC channels, $C(D)\ge\underline{C}(D)$, where
			\begin{subequations}\label{eq:lower-cap}
				\begin{align}
					\underline{C}(D):=\max_{p_{\mathbf{U}\mathbf{X}}}\;
					&\min\big\{I(\mathbf{X};\mathbf{Y}_1),\,I(\mathbf{X};\mathbf{Y}_1|\mathbf{U})+I(\mathbf{U};\mathbf{Y}_2)\big\}\label{eq:lower-cap-a}\\
					\text{s.t.}\;\;
					&d^*(\mathbf{S}|\mathbf{U},\mathbf{Y}_2)\le D.\label{eq:lower-cap-b}
				\end{align}
			\end{subequations}
			Here, the joint distribution factorizes as $p_{\mathbf{U}\mathbf{X}}\,p_{\mathbf{Y}_1\mathbf{Y}_2\mathbf{S}|\mathbf{X}}$, and the auxiliary random variable satisfies $|\mathcal{U}|\le|\mathcal{X}|+1$.
		\end{theorem}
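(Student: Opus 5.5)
The plan is to reuse the superposition coding scheme of the achievability part of Theorem~\ref{thm:parallel}, now with a single (non-product) common layer. Fix $p_{\mathbf{U}\mathbf{X}}$ satisfying~\eqref{eq:lower-cap-b}. Split $M$ into $(M_0,M_1)$ with rates $R_u,R_v$. Generate $2^{nR_u}$ cloud centers $\mathbf{u}^n(m_0)\sim\prod_t p_{\mathbf{U}}$. For each cloud center, generate $2^{nR_v}$ satellite codewords $\mathbf{x}^n(m_0,m_1)\sim\prod_t p_{\mathbf{X}|\mathbf{U}}(\cdot|\mathbf{u}_t(m_0))$, and transmit $\mathbf{x}^n(m_0,m_1)$.

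\textbf{Decoding.} The sensing Rx decodes only $m_0$, by joint typicality of $\mathbf{u}^n(\hat m_0)$ with $\mathbf{y}_2^n$; this succeeds if $R_u<I(\mathbf{U};\mathbf{Y}_2)-\delta(\epsilon)$. The communication Rx decodes $(m_0,m_1)$. Two equivalent routes are available:
\begin{itemize}
\item simultaneous nonunique decoding, giving the standard conditions $R_v<I(\mathbf{X};\mathbf{Y}_1|\mathbf{U})$ and $R_u+R_v<I(\mathbf{X};\mathbf{Y}_1)$;
\item successive decoding as in Section~\ref{subsec:par-ach}, requiring $R_u<I(\mathbf{U};\mathbf{Y}_1)$.
\end{itemize}
I will use simultaneous decoding. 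Unlike in the parallel case, there is no degradation relation here to guarantee $I(\mathbf{U};\mathbf{Y}_2)\le I(\mathbf{U};\mathbf{Y}_1)$ or its reverse. Simultaneous decoding avoids this issue, because it requires no separate constraint on $R_u$ at the communication Rx.

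\textbf{Rate region.} Fourier--Motzkin elimination of the three constraints
\begin{equation*}
R_u<I(\mathbf{U};\mathbf{Y}_2),\quad R_v<I(\mathbf{X};\mathbf{Y}_1|\mathbf{U}),\quad R_u+R_v<I(\mathbf{X};\mathbf{Y}_1),
\end{equation*}
with $R=R_u+R_v$ and $R_u,R_v\ge0$, yields exactly $R<\min\{I(\mathbf{X};\mathbf{Y}_1),\,I(\mathbf{X};\mathbf{Y}_1|\mathbf{U})+I(\mathbf{U};\mathbf{Y}_2)\}$, up to the $\delta(\epsilon)$ terms. The constraint $R_v\ge0$ introduces nothing new, since $I(\mathbf{X};\mathbf{Y}_1|\mathbf{U})\le I(\mathbf{X};\mathbf{Y}_1)$ under the Markov chain $\mathbf{U}\leftrightarrow\mathbf{X}\leftrightarrow\mathbf{Y}_1$. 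As before, a layer of zero rate needs no decoding.

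\textbf{Distortion.} The sensing Rx outputs $\hat{\mathbf{S}}_t=\hat s^*(\mathbf{u}_t(\hat m_0),\mathbf{Y}_{2,t})$. On the event $\{\hat M_0=M_0\}$ this coincides with the genie-aided estimator. By the codebook symmetry, $(\mathbf{U}_t,\mathbf{Y}_{2,t},\mathbf{S}_t)\sim p_{\mathbf{U}\mathbf{Y}_2\mathbf{S}}$, so the expected distortion is at most $d^*(\mathbf{S}|\mathbf{U},\mathbf{Y}_2)+d_{\max}\mathbb{E}_{\mathcal{C}}[\mathbb{P}(\mathcal{E}_0|\mathcal{C})]$. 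The $\lambda_n$ argument from Section~\ref{subsec:par-ach} then extracts a single codebook that simultaneously has vanishing error probability and distortion at most $D$ asymptotically.

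\textbf{Cardinality and main obstacle.} The cardinality bound $|\mathcal{U}|\le|\mathcal{X}|+1$ follows from the support lemma. We preserve $p_{\mathbf{X}}$ through $|\mathcal{X}|-1$ functionals, which also fixes $I(\mathbf{X};\mathbf{Y}_1)$. We further preserve $H(\mathbf{Y}_1|\mathbf{U})-H(\mathbf{Y}_2|\mathbf{U})$, which combines the two $\mathbf{U}$-dependent rate terms, and $d^*(\mathbf{S}|\mathbf{U},\mathbf{Y}_2)$. The latter is a sum over $u$ of $p(u)$ times a continuous function of $p_{\mathbf{X}|\mathbf{U}=u}$, so the support lemma applies. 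This gives $|\mathcal{X}|+1$ functionals in total. The only delicate step is the rate region, namely choosing simultaneous rather than successive decoding at the communication Rx, so that the bound does not pick up an extra term $I(\mathbf{U};\mathbf{Y}_1)$ in the absence of a degradation order.
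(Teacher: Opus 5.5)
Your proposal is correct and follows essentially the same superposition-coding argument that the paper invokes and spells out for the parallel case in Section~\ref{subsec:par-ach}: the sensing Rx decodes only the cloud center, the state is estimated symbol-wise by the Bayes estimator from $(\mathbf{u}_t(\hat m_0),\mathbf{Y}_{2,t})$, and a single codebook is extracted with the $\lambda_n$ argument. Your claimed ``main obstacle'' is not actually an obstacle: successive decoding gives $R_u<\min\{I(\mathbf{U};\mathbf{Y}_1),I(\mathbf{U};\mathbf{Y}_2)\}$ and $R_v<I(\mathbf{X};\mathbf{Y}_1|\mathbf{U})$, which by $I(\mathbf{U};\mathbf{Y}_1)+I(\mathbf{X};\mathbf{Y}_1|\mathbf{U})=I(\mathbf{X};\mathbf{Y}_1)$ yields the same sum-rate bound without any degradation order, and this is exactly how the paper handles the equally unordered common layer in the parallel case; also, since the communication Rx must recover both $m_0$ and $m_1$, your two conditions are those of joint \emph{unique} decoding, not nonunique decoding.
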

		\begin{IEEEproof}
			Theorem~\ref{thm:mimo-ach} generalizes~\cite[Theorem~3]{Jiao2023bistatic}, where the communication Rx additionally observes the state, i.e., $\mathbf{S}$ is a component of $\mathbf{Y}_1$, and~\cite[Theorem~1]{Chen2026TCOM}, which is established under logarithmic loss. The proof follows the same superposition coding argument as in~\cite{Jiao2023bistatic,Chen2026TCOM} and is omitted.
		\end{IEEEproof}

		\begin{theorem}\label{thm:mimo-conv}
			For general vector ISAC channels, $C(D)\le\overline{C}(D)$, where
			\begin{subequations}\label{eq:upper-cap}
				\begin{align}
					\overline{C}(D):=\max_{p_{\mathbf{U}\mathbf{X}}}\;
					&\min\big\{I(\mathbf{X};\mathbf{Y}_1),\,I(\mathbf{U};\mathbf{Y}_2)+I(\mathbf{X};\mathbf{Y}_1,\mathbf{Y}_2|\mathbf{U})\big\}\label{eq:upper-cap-a}\\
					\text{s.t.}\;\;
					&d^*(\mathbf{S}|\mathbf{U},\mathbf{Y}_2)\le D.\label{eq:upper-cap-b}
				\end{align}
			\end{subequations}
			Here, the joint distribution factorizes as $p_{\mathbf{U}\mathbf{X}}\,p_{\mathbf{Y}_1\mathbf{Y}_2\mathbf{S}|\mathbf{X}}$, and the auxiliary random variable satisfies $|\mathcal{U}|\le|\mathcal{X}|+1$.
		\end{theorem}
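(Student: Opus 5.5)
The proof follows the converse of Theorem~\ref{thm:parallel} up to the point where degradedness was used. The degradedness steps are skipped and the auxiliary is taken directly as $\mathbf{U}:=\mathbf{W}=(Q,\mathbf{W}_Q)$ with $\mathbf{W}_t=(\mathbf{O}^{t-1},\mathbf{Y}_{2,t+1}^n)$.

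\textbf{First rate bound.} Fano's inequality together with \eqref{eq:par-conv-fano} gives $R-\epsilon_n\le I(\mathbf{X};\mathbf{Y}_1|Q)\le I(\mathbf{X};\mathbf{Y}_1)$. This step uses only memorylessness and $Q\leftrightarrow\mathbf{X}\leftrightarrow\mathbf{Y}_1$, not the product structure.

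\textbf{Second rate bound.} The chain \eqref{eq:par-conv-temporal} and its single-letterization \eqref{eq:par-conv-single} were derived for an arbitrary memoryless channel. They therefore give directly
\begin{equation*}
R-\epsilon_n\le I(\mathbf{X};\mathbf{Y}_1,\mathbf{Y}_2|\mathbf{U})+I(\mathbf{U};\mathbf{Y}_2).
\end{equation*}
It remains to confirm the Markov chain $\mathbf{U}\leftrightarrow\mathbf{X}\leftrightarrow(\mathbf{Y}_1,\mathbf{Y}_2,\mathbf{S})$. For each $t$, $\mathbf{W}_t$ is a function of $(M,\mathbf{S}^{n\setminus t})$ and channel outputs at times other than $t$. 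These are independent of $(\mathbf{O}_t,\mathbf{S}_t)$ given $\mathbf{X}_t$, by memorylessness and the i.i.d.\ states. Conditioning on $Q$ preserves the chain. Consequently the joint distribution factorizes as $p_{\mathbf{U}\mathbf{X}}\,p_{\mathbf{Y}_1\mathbf{Y}_2\mathbf{S}|\mathbf{X}}$, as the statement requires.

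\textbf{Distortion.} This step mirrors \eqref{eq:conv-dist-B}. Since $\hat{\mathbf{S}}_t$ is a function of $\mathbf{Y}_2^n$, and conditioning cannot increase $d^*$,
\begin{equation*}
\mathbb{E}[d(\mathbf{S}_t,\hat{\mathbf{S}}_t)]\ge d^*(\mathbf{S}_t|\mathbf{Y}_2^n)\ge d^*(\mathbf{S}_t|\mathbf{Y}_1^{t-1},\mathbf{Y}_2^n)=d^*(\mathbf{S}_t|\mathbf{W}_t,\mathbf{Y}_{2,t}).
\end{equation*}
The last equality holds because $(\mathbf{W}_t,\mathbf{Y}_{2,t})$ consists of exactly the variables $(\mathbf{Y}_1^{t-1},\mathbf{Y}_2^n)$. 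Averaging over $t$, and using that $d^*(\mathbf{S}|\mathbf{U},\mathbf{Y}_2)=\frac1n\sum_t d^*(\mathbf{S}_t|\mathbf{W}_t,\mathbf{Y}_{2,t})$ because $\mathbf{U}$ contains $Q$, we obtain $d^*(\mathbf{S}|\mathbf{U},\mathbf{Y}_2)\le D+\epsilon$.

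\textbf{Closing the argument.} Combining the two rate bounds with this distortion bound places $R-\epsilon_n$ below the objective of \eqref{eq:upper-cap} at distortion level $D+\epsilon$. We then let $n\to\infty$ and $\epsilon\to0$. This requires continuity of $\overline{C}(\cdot)$ from the right at $D$, which follows from concavity in $D$ via a time-sharing argument (the auxiliary $\mathbf{U}$ can absorb the time-sharing variable) together with compactness of the feasible set once the alphabet is bounded.

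\textbf{Main obstacle: the cardinality bound $|\mathcal{U}|\le|\mathcal{X}|+1$.} I would use the support lemma. It suffices to preserve $p_{\mathbf{X}}$, which accounts for $|\mathcal{X}|-1$ functionals. Preserving $p_{\mathbf{X}}$ fixes $I(\mathbf{X};\mathbf{Y}_1)$ and $H(\mathbf{Y}_2)$, so two further functionals are needed:
\begin{itemize}
\item $H(\mathbf{Y}_2|\mathbf{U}=u)-H(\mathbf{Y}_1,\mathbf{Y}_2|\mathbf{U}=u)+H(\mathbf{Y}_1,\mathbf{Y}_2|\mathbf{X},\mathbf{U}=u)$, which preserves the second rate term, since that term equals $H(\mathbf{Y}_2)-H(\mathbf{Y}_2|\mathbf{U})+H(\mathbf{Y}_1,\mathbf{Y}_2|\mathbf{U})-H(\mathbf{Y}_1,\mathbf{Y}_2|\mathbf{X})$;
\item $d^*(\mathbf{S}|\mathbf{Y}_2,\mathbf{U}=u)$, which preserves the distortion.
\end{itemize}
Each of these is a continuous function of $p_{\mathbf{X}|\mathbf{U}=u}$. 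This gives $|\mathcal{X}|+1$ functionals in total, and hence the stated bound.
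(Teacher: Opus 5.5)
Your proof is correct, but it reaches the bound by a different route from the paper's. You reuse the Csisz\'ar-sum chain \eqref{eq:par-conv-temporal}--\eqref{eq:par-conv-single} from the parallel converse, which indeed never invokes the product structure or degradedness. Your auxiliary is $\mathbf{U}=(Q,\mathbf{W}_Q)$ with $\mathbf{W}_t=(\mathbf{Y}_1^{t-1},\mathbf{Y}_2^{t-1},\mathbf{Y}_{2,t+1}^n)$. The paper instead splits $I(\mathbf{X}^n;\mathbf{Y}_1^n,\mathbf{Y}_2^n)=I(\mathbf{X}^n;\mathbf{Y}_2^n)+I(\mathbf{X}^n;\mathbf{Y}_1^n|\mathbf{Y}_2^n)$ and bounds each piece with $\mathbf{U}_t=(\mathbf{Y}_2^{t-1},\mathbf{Y}_{2,t+1}^n)$. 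It uses only the chain rule and the fact that conditioning reduces entropy, with no Csisz\'ar sum identity. Its distortion step is an equality, since $(\mathbf{U}_t,\mathbf{Y}_{2,t})=\mathbf{Y}_2^n$ exactly, whereas yours needs the extra conditioning on $\mathbf{Y}_1^{t-1}$.

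Both routes produce a code-induced distribution feasible for the same optimization, so for Theorem~\ref{thm:mimo-conv} as stated they are interchangeable. You also supply the support-lemma cardinality reduction and the right-continuity of $\overline{C}(\cdot)$, which the paper leaves implicit.

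What the paper's choice buys is the property stressed in Remark~\ref{rem:mimo-conv-aux}. Its auxiliary is a function of $\mathbf{Y}_2^n$ alone, so the code-induced $p_{\mathbf{U}\mathbf{X}}$ depends on the channel only through $p_{\mathbf{Y}_2|\mathbf{X}}$. It is therefore unchanged under any recoupling of $(\mathbf{Y}_1,\mathbf{Y}_2)$ that preserves the marginals. This is what lets Remark~\ref{rem:mimo-coupling} place the infimum over couplings inside the supremum over $p_{\mathbf{U}\mathbf{X}}$. The enhancement argument in Section~\ref{subsec:mimo-gauss} needs exactly this, because there the coupling is built from $\hat{\Sigma}_V$ and $\Sigma_X$ of the given $p_{\mathbf{U}\mathbf{X}}$.

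Your auxiliary contains $\mathbf{Y}_1^{t-1}$, so the induced $p_{\mathbf{U}\mathbf{X}}$, and with it the distortion term $d^*(\mathbf{S}|\mathbf{U},\mathbf{Y}_2)$, changes with the coupling. You would obtain only the weaker form with the infimum over couplings outside the supremum. That suffices for Theorem~\ref{thm:mimo-conv}, but not for the later Gaussian converse.
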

		\begin{IEEEproof}
			See Appendix~\ref{app:mimo-conv}.
		\end{IEEEproof}
		\begin{remark}\label{rem:mimo-conv-aux}
			Theorem~\ref{thm:mimo-conv} extends the upper bounds in~\cite[Theorem~2]{Chen2026TCOM} and~\cite[Theorem~5]{Jiao2023bistatic} to the present setting, where the distortion measure is general and the communication Rx need not observe the state. Compared with~\cite{Chen2026TCOM,Jiao2023bistatic}, the bound in Theorem~\ref{thm:mimo-conv} involves a single auxiliary random variable, which is constructed in Appendix~\ref{app:mimo-conv} from $\mathbf{Y}_2^n$ only, whereas the auxiliary random variables in~\cite{Chen2026TCOM,Jiao2023bistatic} also contain past outputs at the communication Rx. This property is essential for Remark~\ref{rem:mimo-coupling} and the channel enhancement argument in Section~\ref{subsec:mimo-gauss}.
		\end{remark}

		\subsection{A Special Gaussian Case}\label{subsec:mimo-gauss}
		
		Consider a vector ISAC channel with additive Gaussian states, whose input-output relationship is given by
		\begin{align}
			\mathbf{Y}_1&=\mathbf{X}+\mathbf{Z}_1,\label{eq:mimo-y1}\\
			\mathbf{Y}_2&=\mathbf{X}+\mathbf{Z}_2+\mathbf{S},\label{eq:mimo-y2}
		\end{align}
		where $\mathbf{X},\mathbf{Y}_1,\mathbf{Y}_2,\mathbf{S}\in\mathbb{R}^K$. The communication noise $\mathbf{Z}_1\sim\mathcal{N}(\mathbf{0},\Sigma_1)$, the sensing noise $\mathbf{Z}_2\sim\mathcal{N}(\mathbf{0},\Sigma_2)$, and the state $\mathbf{S}\sim\mathcal{N}(\mathbf{0},\Sigma_s)$ are mutually independent and independent of the channel input $\mathbf{X}$, and their covariance matrices satisfy $\Sigma_1,\Sigma_2,\Sigma_s\succ 0$. The input is subject to the power constraint $\operatorname{tr}(\Sigma_X)\le P$, where $\Sigma_X:=\mathbb{E}[\mathbf{X}\mathbf{X}^\top]$.
		
		When $\Sigma_1$, $\Sigma_2$, and $\Sigma_s$ are all diagonal, this model reduces to the parallel Gaussian channel discussed in Section~\ref{subsec:par-gauss}. If $\Sigma_1\succeq\Sigma_2+\Sigma_s$, then the communication channel is degraded with respect to the sensing channel; if $\Sigma_1\preceq\Sigma_2+\Sigma_s$, then the opposite holds. In this section, no such positive semidefinite order between the two is required; thus, the communication and sensing channels need not be degraded with respect to each other in either direction.
		
		The sensing distortion is measured by the squared error, in which case the minimum conditional expected distortion in~\eqref{eq:model-dstar} becomes
		\begin{equation}\label{eq:mimo-var}
			d^*(\mathbf{S}|\boldsymbol{\Theta})
			=\operatorname{tr}\operatorname{Cov}(\mathbf{S}|\boldsymbol{\Theta}).
		\end{equation}
		
		By extending Theorems~\ref{thm:mimo-ach} and~\ref{thm:mimo-conv} to the Gaussian case, we obtain
		\begin{align}
			\underline{C}(D,P)\le C(D,P)\le\overline{C}(D,P),
		\end{align}
		where
		\begin{subequations}\label{eq:mimo-ach-bound}
			\begin{align}
				\underline{C}(D,P):=\sup_{p_{\mathbf{U}\mathbf{X}}}\;
				&\min\big\{I(\mathbf{X};\mathbf{Y}_1),\,I(\mathbf{X};\mathbf{Y}_1|\mathbf{U})+I(\mathbf{U};\mathbf{Y}_2)\big\}\label{eq:mimo-ach-bound-a}\\
				\text{s.t.}\;\;
				&\operatorname{tr}\operatorname{Cov}(\mathbf{S}|\mathbf{U},\mathbf{Y}_2)\le D,\label{eq:mimo-ach-bound-b}\\
				&\operatorname{tr}(\Sigma_X)\le P,\label{eq:mimo-ach-bound-c}
			\end{align}
		\end{subequations}
		\begin{subequations}\label{eq:mimo-conv-bound}
			\begin{align}
				\overline{C}(D,P):=\sup_{p_{\mathbf{U}\mathbf{X}}}\;
				&\min\big\{I(\mathbf{X};\mathbf{Y}_1),\,I(\mathbf{X};\mathbf{Y}_1,\mathbf{Y}_2|\mathbf{U})+I(\mathbf{U};\mathbf{Y}_2)\big\}\label{eq:mimo-conv-bound-a}\\
				\text{s.t.}\;\;
				&\operatorname{tr}\operatorname{Cov}(\mathbf{S}|\mathbf{U},\mathbf{Y}_2)\le D,\label{eq:mimo-conv-bound-b}\\
				&\operatorname{tr}(\Sigma_X)\le P,\label{eq:mimo-conv-bound-c}
			\end{align}
		\end{subequations}
		and in both cases the joint distribution factorizes as $p_{\mathbf{U}\mathbf{X}}\,p_{\mathbf{Y}_1\mathbf{Y}_2\mathbf{S}|\mathbf{X}}$.
		\begin{remark}\label{rem:mimo-coupling}In~\eqref{eq:mimo-conv-bound},
			$\mathbf{U}\leftrightarrow\mathbf{X}\leftrightarrow
			(\mathbf{Y}_1,\mathbf{Y}_2,\mathbf{S})$ forms a Markov chain.
			Among the quantities in~\eqref{eq:mimo-conv-bound}, only
			$I(\mathbf{X};\mathbf{Y}_1,\mathbf{Y}_2|\mathbf{U})$ depends
			on the coupling of $\mathbf{Y}_1$ and $\mathbf{Y}_2$ given
			$\mathbf{X}$; the remaining terms and the
			constraints~\eqref{eq:mimo-conv-bound-b}
			and~\eqref{eq:mimo-conv-bound-c} depend only on
			$p_{\mathbf{U}\mathbf{X}}$, $p_{\mathbf{Y}_1|\mathbf{X}}$,
			and $p_{\mathbf{Y}_2\mathbf{S}|\mathbf{X}}$.
			By Remark~\ref{rem:marginal}, $C(D,P)$ is unchanged under
			any recoupling that preserves these two conditional channel
			marginals. Hence, Theorem~\ref{thm:mimo-conv} applies to
			every such coupled channel.
			Moreover, for any fixed code, the joint law of
			$\mathbf{U}=(Q,\mathbf{Y}_2^{Q-1},\mathbf{Y}_{2,Q+1}^n)$
			and $\mathbf{X}=\mathbf{X}_Q$, as constructed in the proof
			of Theorem~\ref{thm:mimo-conv}, depends on the channel only
			through $p_{\mathbf{Y}_2|\mathbf{X}}$ and is therefore
			unchanged under admissible recoupling. Consequently, the
			same code-induced distribution $p_{\mathbf{U}\mathbf{X}}$
			satisfies the converse rate bounds under every admissible
			coupling. We may thus take an infimum over admissible
			couplings inside the supremum over $p_{\mathbf{U}\mathbf{X}}$
			in~\eqref{eq:mimo-conv-bound}, allowing the coupling to
			depend on $p_{\mathbf{U}\mathbf{X}}$.
			This freedom is exploited in the proof of
			$C(D,P)\le R_{\mathrm{G}}(D,P)$, where the coupling is
			constructed using Lemma~\ref{lem:mimo-kkt}.
		\end{remark}
		
		\subsubsection{Proof of the Lower Bound}\label{subsubsec:mimo-ach-proof}
		
		In~\eqref{eq:mimo-ach-bound}, we choose the Gaussian superposition input $\mathbf{X}=\mathbf{U}+\mathbf{V}$, where $\mathbf{U}\sim\mathcal{N}(\mathbf{0},\Sigma_U)$ and $\mathbf{V}\sim\mathcal{N}(\mathbf{0},\Sigma_V)$ are mutually independent and independent of $(\mathbf{Z}_1,\mathbf{Z}_2,\mathbf{S})$, with $\Sigma_U,\Sigma_V\succeq0$, so that $\Sigma_X=\Sigma_U+\Sigma_V$. Let $\Sigma_g:=\Sigma_2+\Sigma_s$ and $\mathbf{G}:=\mathbf{Z}_2+\mathbf{S}\sim\mathcal{N}(\mathbf{0},\Sigma_g)$, so that $\mathbf{Y}_2=\mathbf{X}+\mathbf{G}$. By the differential entropy formula for Gaussian random vectors,
		\begin{align}
			I(\mathbf{X};\mathbf{Y}_1)
			&=h(\mathbf{X}+\mathbf{Z}_1)-h(\mathbf{Z}_1)\notag\\
			&=\frac{1}{2}\log\frac{|\Sigma_U+\Sigma_V+\Sigma_1|}{|\Sigma_1|},\label{eq:mimo-gauss-ach-r1}
		\end{align}
		and
		\begin{align}
			&I(\mathbf{X};\mathbf{Y}_1|\mathbf{U})+I(\mathbf{U};\mathbf{Y}_2)\notag\\
			&=h(\mathbf{V}+\mathbf{Z}_1)-h(\mathbf{Z}_1)+h(\mathbf{X}+\mathbf{G})-h(\mathbf{V}+\mathbf{G})\notag\\
			&=\frac{1}{2}\log\frac{|\Sigma_V+\Sigma_1|}{|\Sigma_1|}
			+\frac{1}{2}\log\frac{|\Sigma_U+\Sigma_V+\Sigma_g|}{|\Sigma_V+\Sigma_g|}.\label{eq:mimo-gauss-ach-r2}
		\end{align}
		On the other hand, since $\mathbf U$ is independent of $(\mathbf V,\mathbf G,\mathbf S)$ and $\mathbf Y_2-\mathbf U=\mathbf V+\mathbf G$, the conditional covariance formula for Gaussian random vectors yields
		\begin{align}\label{eq:mimo-gauss-ach-dist}
			d^*(\mathbf S|\mathbf U,\mathbf Y_2)
			&=\operatorname{tr}\operatorname{Cov}(\mathbf S|\mathbf V+\mathbf G)\notag\\
			&=\operatorname{tr}\left[
			\Sigma_s-\Sigma_s(\Sigma_V+\Sigma_g)^{-1}\Sigma_s
			\right]\notag\\
			&=\operatorname{tr}\Sigma_e(\Sigma_V),
		\end{align}
		where $\Sigma_e(A):=\Sigma_s-\Sigma_s(A+\Sigma_g)^{-1}\Sigma_s$.
		
		Therefore,
		\begin{equation}\label{eq:mimo-gauss-lb}
			C(D,P)\ge\underline{C}(D,P)\ge R_{\mathrm{G}}(D,P),
		\end{equation}
		where
		\begin{subequations}\label{eq:mimo-CG}
			\begin{align}
				R_{\mathrm{G}}(D,P):=&\max_{\Sigma_U,\Sigma_V\succeq0}\;
				\min\bigg\{\frac{1}{2}\log\frac{|\Sigma_U+\Sigma_V+\Sigma_1|}{|\Sigma_1|},\notag\\
				&\frac{1}{2}\log\frac{|\Sigma_V+\Sigma_1|}{|\Sigma_1|}+\frac{1}{2}\log\frac{|\Sigma_U+\Sigma_V+\Sigma_g|}{|\Sigma_V+\Sigma_g|}\bigg\}\label{eq:mimo-CG-a}\\
				\text{s.t.}\;\;
				&\operatorname{tr}\Sigma_e(\Sigma_V)\le D,\label{eq:mimo-CG-b}\\
				&\operatorname{tr}(\Sigma_U+\Sigma_V)\le P.\label{eq:mimo-CG-c}
			\end{align}
		\end{subequations}
		
		\subsubsection{Proof of the Upper Bound}\label{subsubsec:mimo-conv-proof}

		It now suffices to show that $C(D,P)\le R_{\mathrm{G}}(D,P)$. To this end, we first present three lemmas, which characterize the sensing distortion constraint, establish an extremal inequality for the difference of conditional differential entropies, and construct an enhanced channel, respectively.
		
		\begin{lemma}[Distortion via Fisher Information]\label{lem:mimo-fisher}
			Let $J_{\mathbf{u}}:=J(\mathbf{Y}_2|\mathbf{U}=\mathbf{u})$ denote the conditional Fisher information matrix, and define $\bar{J}:=\mathbb{E}_{\mathbf{U}}[J_{\mathbf{U}}]$ and the equivalent covariance matrix $\hat{\Sigma}_V:=\bar{J}^{-1}-\Sigma_g$. Then
			\begin{enumerate}[\IEEEsetlabelwidth{(ii)}]
				\item[(i)] $\mathbf{0}\preceq\hat{\Sigma}_V\preceq\Sigma_X$; and
				\item[(ii)] $d^*(\mathbf{S}|\mathbf{U},\mathbf{Y}_2)=\operatorname{tr}\Sigma_e(\hat{\Sigma}_V)$.
			\end{enumerate}
		\end{lemma}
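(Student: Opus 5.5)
The plan is to prove both parts from one identity: express $\bar J$ through the conditional error covariance of the composite noise $\mathbf{G}=\mathbf{Z}_2+\mathbf{S}$, and then relate that covariance to the state error covariance using joint Gaussianity of $(\mathbf{S},\mathbf{G})$.

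\emph{Step 1 (score via Tweedie's formula).} Given $\mathbf{U}=\mathbf{u}$, we have $\mathbf{Y}_2=\mathbf{X}+\mathbf{G}$ with $\mathbf{G}\sim\mathcal N(\mathbf 0,\Sigma_g)$ independent of $(\mathbf{U},\mathbf{X})$. Hence the conditional density $p(\mathbf y|\mathbf u)=\mathbb E[\phi_{\Sigma_g}(\mathbf y-\mathbf X)|\mathbf U=\mathbf u]$ is smooth and strictly positive. Differentiating under the expectation, which is justified by Gaussian tails, gives
\[
\nabla_{\mathbf y}\ln p(\mathbf y|\mathbf u)=-\Sigma_g^{-1}\,\mathbb E[\mathbf G|\mathbf Y_2=\mathbf y,\mathbf U=\mathbf u].
\]
Since $\mathbb E[\mathbf G\mathbf G^\top|\mathbf U=\mathbf u]=\Sigma_g$, we obtain
\[
J_{\mathbf u}=\Sigma_g^{-1}\big(\Sigma_g-\operatorname{Cov}(\mathbf G|\mathbf Y_2,\mathbf U=\mathbf u)\big)\Sigma_g^{-1}.
\]
Averaging over $\mathbf U$ yields
\[
\bar J=\Sigma_g^{-1}-\Sigma_g^{-1}\operatorname{Cov}(\mathbf G|\mathbf U,\mathbf Y_2)\Sigma_g^{-1}.
\]

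\emph{Step 2 (from $\mathbf G$ to $\mathbf S$, giving part (ii)).} Since $(\mathbf S,\mathbf G)$ is jointly Gaussian and independent of $(\mathbf U,\mathbf X)$, we can write
\[
\mathbf S=\Sigma_s\Sigma_g^{-1}\mathbf G+\mathbf N,
\]
where $\mathbf N$ has covariance $\Sigma_s-\Sigma_s\Sigma_g^{-1}\Sigma_s$ and is independent of $(\mathbf U,\mathbf X,\mathbf G)$, and hence of $(\mathbf U,\mathbf Y_2,\mathbf G)$. Therefore
\[
\operatorname{Cov}(\mathbf S|\mathbf U,\mathbf Y_2)=\Sigma_s-\Sigma_s\Sigma_g^{-1}\Sigma_s+\Sigma_s\Sigma_g^{-1}\operatorname{Cov}(\mathbf G|\mathbf U,\mathbf Y_2)\Sigma_g^{-1}\Sigma_s=\Sigma_s-\Sigma_s\bar J\Sigma_s .
\]
Once invertibility of $\bar J$ is established in Step 3, we have $\hat\Sigma_V+\Sigma_g=\bar J^{-1}$. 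This gives $\Sigma_e(\hat\Sigma_V)=\Sigma_s-\Sigma_s\bar J\Sigma_s$, and taking traces yields (ii).

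\emph{Step 3 (bounds, giving part (i)).} The lower bound $\hat\Sigma_V\succeq\mathbf 0$ is equivalent to $\bar J\preceq\Sigma_g^{-1}$, which follows from Step 1 because $\operatorname{Cov}(\mathbf G|\mathbf U,\mathbf Y_2)\succeq\mathbf 0$. For the upper bound, I would use the matrix Cram\'er--Rao-type inequality $J(\mathbf Y)\succeq\operatorname{Cov}(\mathbf Y)^{-1}$. It is proved by taking the expectation of the PSD block matrix built from the score and $\mathbf Y-\mathbb E\mathbf Y$, together with $\mathbb E[\nabla\ln p\,(\mathbf Y-\mathbb E\mathbf Y)^\top]=-\mathbf I$, obtained by integration by parts. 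Applied conditionally, it gives
\[
J_{\mathbf u}\succeq\big(\operatorname{Cov}(\mathbf X|\mathbf U=\mathbf u)+\Sigma_g\big)^{-1}.
\]
By operator convexity of $A\mapsto A^{-1}$ (matrix Jensen),
\[
\bar J\succeq\big(\mathbb E[\operatorname{Cov}(\mathbf X|\mathbf U)]+\Sigma_g\big)^{-1}\succeq(\Sigma_X+\Sigma_g)^{-1}\succ\mathbf 0,
\]
since $\mathbb E[\operatorname{Cov}(\mathbf X|\mathbf U)]\preceq\operatorname{Cov}(\mathbf X)\preceq\Sigma_X$. This simultaneously shows that $\bar J$ is invertible and, after inverting, that $\hat\Sigma_V\preceq\Sigma_X$.

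The main obstacle is technical rather than conceptual: justifying the regularity behind Steps 1 and 3. This means interchanging differentiation and expectation, performing the integration by parts with vanishing boundary terms, and ensuring finiteness of $J_{\mathbf u}$ and of $\operatorname{Cov}(\mathbf X|\mathbf U=\mathbf u)$ for almost every $\mathbf u$. The first two issues I would handle using the Gaussian smoothing and the finiteness of $\mathbb E\|\mathbf X\|^2\le P$. For the third, I would note that $J_{\mathbf u}\preceq\Sigma_g^{-1}$ is automatic from Step 1, and that $\operatorname{Cov}(\mathbf X|\mathbf U=\mathbf u)$ is finite for almost every $\mathbf u$ because $\mathbb E\|\mathbf X\|^2<\infty$.
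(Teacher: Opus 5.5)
Your proof is correct. The upper bound $\hat{\Sigma}_V\preceq\Sigma_X$ is obtained exactly as in the paper: the conditional Cram\'er--Rao bound $J_{\mathbf u}\succeq(\Sigma_{\mathbf X|\mathbf u}+\Sigma_g)^{-1}$, then operator Jensen, then $\mathbb E[\Sigma_{\mathbf X|\mathbf U}]\preceq\Sigma_X$.

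The difference lies in which Gaussian component you apply Tweedie's formula to. The paper writes $\mathbf Y_2=\mathbf S+(\mathbf X+\mathbf Z_2)$ and differentiates with respect to the $\Sigma_s$-Gaussian part. This gives $\Sigma_sJ_{\mathbf u}\Sigma_s=\Sigma_s-\operatorname{Cov}(\mathbf S|\mathbf Y_2,\mathbf U=\mathbf u)$ and hence (ii) in one step. On its own, however, that identity only yields $\bar J\preceq\Sigma_s^{-1}$, which is weaker than $\Sigma_g^{-1}$. So the paper invokes the Fisher information inequality $J(\mathbf X+\mathbf G|\mathbf U=\mathbf u)\preceq J(\mathbf G)$ separately to get $\hat{\Sigma}_V\succeq\mathbf 0$.

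You instead differentiate with respect to the whole noise $\mathbf G$, obtaining $\bar J=\Sigma_g^{-1}-\Sigma_g^{-1}\operatorname{Cov}(\mathbf G|\mathbf U,\mathbf Y_2)\Sigma_g^{-1}$. This gives $\bar J\preceq\Sigma_g^{-1}$ immediately, and it is in fact the standard proof of that special case of the FII. The cost is the extra Gaussian regression step $\mathbf S=\Sigma_s\Sigma_g^{-1}\mathbf G+\mathbf N$, with $\mathbf N$ independent of $(\mathbf U,\mathbf Y_2,\mathbf G)$, to transfer the identity to $\mathbf S$. That step is correctly justified, and the cross terms vanish as you need.

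So your version is more self-contained, since one identity serves both the lower half of (i) and (ii), while the paper's is shorter for (ii). Your identity also gives the Cram\'er--Rao step for free, so no integration by parts is needed. The MMSE error covariance is dominated by the LMMSE one, so $\operatorname{Cov}(\mathbf G|\mathbf Y_2,\mathbf U=\mathbf u)\preceq\Sigma_g-\Sigma_g(\Sigma_{\mathbf X|\mathbf u}+\Sigma_g)^{-1}\Sigma_g$. Substituting this into your expression for $J_{\mathbf u}$ yields $J_{\mathbf u}\succeq(\Sigma_{\mathbf X|\mathbf u}+\Sigma_g)^{-1}$ directly.
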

		\begin{IEEEproof}
			See Appendix~\ref{app:lem-fisher}.
		\end{IEEEproof}
		
		\begin{lemma}[Extremal Inequality]\label{lem:mimo-extremal}
			Let $\tilde{\mathbf{Y}}_1:=\mathbf{X}+\tilde{\mathbf{Z}}_1$, where $\tilde{\mathbf{Z}}_1\sim\mathcal{N}(\mathbf{0},\tilde{\Sigma}_1)$ with $\mathbf{0}\prec\tilde{\Sigma}_1\preceq\Sigma_g$ is independent of $(\mathbf{U},\mathbf{X})$, and let $\hat{\Sigma}_V$ be as defined in Lemma~\ref{lem:mimo-fisher}. Then, for any $p_{\mathbf{U}\mathbf{X}}$,
			\begin{equation}\label{eq:mimo-entropy-bound}
				h(\tilde{\mathbf{Y}}_1|\mathbf{U})-h(\mathbf{Y}_2|\mathbf{U})
				\le\frac{1}{2}\log\frac{|\hat{\Sigma}_V+\tilde{\Sigma}_1|}{|\hat{\Sigma}_V+\Sigma_g|}.
			\end{equation}
		\end{lemma}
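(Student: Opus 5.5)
The plan is to interpolate between $\tilde{\mathbf{Y}}_1$ and $\mathbf{Y}_2$ by adding Gaussian noise and to control the entropy increment with de Bruijn's identity. The Fisher information along the path will be bounded below using the matrix Fisher information inequality together with an operator-convexity (Jensen) step, whose averaged form is exactly $\bar J$.

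\emph{Step 1 (coupling and interpolation).} Set $K:=\Sigma_g-\tilde{\Sigma}_1\succeq\mathbf{0}$. Since the inequality~\eqref{eq:mimo-entropy-bound} involves only the conditional marginals given $\mathbf{U}$, we may realize $\mathbf{G}=\tilde{\mathbf{Z}}_1+\mathbf{W}$. Here $\mathbf{W}\sim\mathcal{N}(\mathbf{0},K)$ is independent of $(\mathbf{U},\mathbf{X},\tilde{\mathbf{Z}}_1)$, and the conditional law of $\mathbf{Y}_2$ given $\mathbf{U}$ is unchanged. For $t\in[0,1]$, define
\[
\mathbf{Y}(t):=\mathbf{X}+\tilde{\mathbf{Z}}_1+\sqrt{t}\,\mathbf{W},
\]
so that $\mathbf{Y}(0)=\tilde{\mathbf{Y}}_1$ and $\mathbf{Y}(1)\overset{d}{=}\mathbf{Y}_2$ given $\mathbf{U}$. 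For each fixed $\mathbf{u}$, every $\mathbf{Y}(t)$ contains a nondegenerate Gaussian component $\tilde{\mathbf{Z}}_1$, so its density is smooth and has finite Fisher information. The conditional (matrix) de Bruijn identity then gives
\[
\frac{d}{dt}h(\mathbf{Y}(t)|\mathbf{U})=\frac{1}{2\ln 2}\operatorname{tr}\big(\mathbb{E}_{\mathbf{U}}[J(\mathbf{Y}(t)|\mathbf{U})]\,K\big).
\]
Integrating over $t\in[0,1]$ yields $h(\mathbf{Y}_2|\mathbf{U})-h(\tilde{\mathbf{Y}}_1|\mathbf{U})$. The claim is therefore equivalent to a lower bound on this integral.

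\emph{Step 2 (pointwise Fisher bound).} For fixed $\mathbf{u}$ we have $\mathbf{Y}_2\overset{d}{=}\mathbf{Y}(t)+\sqrt{1-t}\,\mathbf{W}'$ with independent Gaussian $\mathbf{W}'\sim\mathcal{N}(\mathbf{0},K)$. The matrix Fisher information inequality for a Gaussian summand, $J(\mathbf{A}+\mathbf{B})^{-1}\succeq J(\mathbf{A})^{-1}+\operatorname{Cov}(\mathbf{B})$, then gives
\[
J_{\mathbf{u}}^{-1}\succeq J(\mathbf{Y}(t)|\mathbf{U}=\mathbf{u})^{-1}+(1-t)K,
\]
that is, $J(\mathbf{Y}(t)|\mathbf{U}=\mathbf{u})\succeq\big(J_{\mathbf{u}}^{-1}-(1-t)K\big)^{-1}$. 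This is well defined because $J_{\mathbf{u}}^{-1}\succeq\Sigma_g\succ(1-t)K$ whenever $\tilde{\Sigma}_1\succ\mathbf{0}$; the bound $J_{\mathbf{u}}\preceq\Sigma_g^{-1}$ again follows from the Fisher information inequality.

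\emph{Step 3 (averaging over $\mathbf{U}$, main obstacle).} We need
\[
\mathbb{E}\big[(J_{\mathbf{U}}^{-1}-C)^{-1}\big]\succeq(\bar J^{-1}-C)^{-1},\qquad C:=(1-t)K,
\]
which is a Jensen inequality for the map $f(J)=(J^{-1}-C)^{-1}$. For $C\succ\mathbf{0}$ we plan to use the identity
\[
f(J)=-C^{-1}+C^{-1}(C^{-1}-J)^{-1}C^{-1},
\]
valid for $\mathbf{0}\prec J\prec C^{-1}$. Since $J\mapsto(C^{-1}-J)^{-1}$ is operator convex on this domain, $f$ is operator convex and the inequality follows. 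The singular case $C\succeq\mathbf{0}$ (and $t=1$) will be handled by perturbing $C\to C+\delta\mathbf{I}$, using that $J_{\mathbf{u}}\preceq\Sigma_g^{-1}$ stays strictly below $(C+\delta\mathbf{I})^{-1}$ for small $\delta$, and then letting $\delta\to0$. Using $\bar J^{-1}=\hat\Sigma_V+\Sigma_g$ and $\Sigma_g-(1-t)K=\tilde\Sigma_1+tK$, we obtain
\[
\mathbb{E}_{\mathbf{U}}[J(\mathbf{Y}(t)|\mathbf{U})]\succeq(\hat\Sigma_V+\tilde\Sigma_1+tK)^{-1}.
\]
Care is needed with integrability and interchange of expectation with the limit, and with verifying the conditional de Bruijn identity; this is where I expect the technical effort to concentrate.

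\emph{Step 4 (integration).} Since $K\succeq\mathbf{0}$, the trace is monotone in the first factor, so
\[
h(\mathbf{Y}_2|\mathbf{U})-h(\tilde{\mathbf{Y}}_1|\mathbf{U})\ge\frac{1}{2\ln2}\int_0^1\operatorname{tr}\big((\hat\Sigma_V+\tilde\Sigma_1+tK)^{-1}K\big)\,dt .
\]
The integral equals $\ln|\hat\Sigma_V+\tilde\Sigma_1+K|-\ln|\hat\Sigma_V+\tilde\Sigma_1|$, by $\frac{d}{dt}\ln|A+tK|=\operatorname{tr}((A+tK)^{-1}K)$. Because $\tilde\Sigma_1+K=\Sigma_g$, the right-hand side is $\frac12\log\frac{|\hat\Sigma_V+\Sigma_g|}{|\hat\Sigma_V+\tilde\Sigma_1|}$. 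Rearranging gives~\eqref{eq:mimo-entropy-bound}. Lemma~\ref{lem:mimo-fisher}(i), $\hat\Sigma_V\succeq\mathbf{0}$, guarantees that all matrices involved are positive definite.
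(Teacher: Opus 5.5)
Your proposal is correct and follows essentially the same route as the paper's proof in Appendix~\ref{app:lem-extremal}: the same Gaussian interpolation with noise covariance $\tilde{\Sigma}_1+tK$, the pointwise Fisher information inequality $J(\mathbf{Y}(t)|\mathbf{U}=\mathbf{u})\succeq\big(J_{\mathbf{u}}^{-1}-(1-t)K\big)^{-1}$, operator convexity of this map via the identical identity $-C^{-1}+C^{-1}(C^{-1}-J)^{-1}C^{-1}$ followed by Jensen to reach $(\hat{\Sigma}_V+\tilde{\Sigma}_1+tK)^{-1}$, and then the de Bruijn identity integrated over $t\in[0,1]$. Your $\delta$-perturbation for singular $C$ (with the uniform bound from $J_{\mathbf{u}}\preceq\Sigma_g^{-1}$ justifying the limit) is simply a more explicit version of the paper's ``by continuity'' step.
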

		\begin{IEEEproof}
			The proof follows the lines of~\cite[Sec.~III-B]{EkremUlukus2012Alt}; see also the extremal inequalities in~\cite{LiuViswanath2007IT} and~\cite[Theorem~5]{EkremUlukus2011MRWTC}. A proof is given in Appendix~\ref{app:lem-extremal} for completeness.
		\end{IEEEproof}
		
		\begin{lemma}[Enhanced Channel Construction]\label{lem:mimo-kkt}
			For any $\delta>0$, define the convex compact set
			\begin{equation*}
				\mathcal{V}_\delta:=\big\{\Sigma_V:\mathbf{0}\preceq\Sigma_V\preceq\Sigma_X+\delta\mathbf{I},\ \Sigma_V\preceq\hat{\Sigma}_V+\delta\mathbf{I}\big\},
			\end{equation*}
			and consider the optimization problem $\max_{\Sigma_V\in\mathcal{V}_\delta}f(\Sigma_V)$, where $f(\Sigma_V):=\log|\Sigma_V+\Sigma_1|-\log|\Sigma_V+\Sigma_g|$. Let $\Sigma_V^*$ be an optimal solution, and let $L_0\succeq\mathbf{0}$ be the Lagrange multiplier associated with the constraint $\Sigma_V\succeq\mathbf{0}$, which satisfies $L_0\Sigma_V^*=\mathbf{0}$. Define $\tilde{\Sigma}_1:=(\Sigma_1^{-1}+L_0)^{-1}$ and $\tilde{f}(\Sigma_V):=\log|\Sigma_V+\tilde{\Sigma}_1|-\log|\Sigma_V+\Sigma_g|$. Then
			\begin{enumerate}[\IEEEsetlabelwidth{(iii)}]
				\item[(i)] $\tilde{\Sigma}_1\preceq\Sigma_1$ and $\tilde{\Sigma}_1\preceq\Sigma_g$;
				\item[(ii)] $\dfrac{|\Sigma_V^*+\tilde{\Sigma}_1|}{|\tilde{\Sigma}_1|}=\dfrac{|\Sigma_V^*+\Sigma_1|}{|\Sigma_1|}$; and
				\item[(iii)] $\tilde{f}$ is concave on the positive semidefinite cone, and $\Sigma_V^*\in\arg\max_{\Sigma_V\in\mathcal{V}_\delta}\tilde{f}(\Sigma_V)$.
			\end{enumerate}
		\end{lemma}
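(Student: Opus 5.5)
The plan follows the enhancement argument of Weingarten, Steinberg and Shamai~\cite{Weingarten2006IT}: write the KKT conditions of the (generally nonconcave) problem $\max_{\Sigma_V\in\mathcal{V}_\delta}f(\Sigma_V)$, absorb the multiplier $L_0$ of the constraint $\Sigma_V\succeq\mathbf{0}$ into the communication noise, and check that the enhanced objective $\tilde f$ is concave with $\Sigma_V^*$ satisfying its KKT conditions.

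\textbf{Step 1: existence and KKT conditions.} $\mathcal{V}_\delta$ is defined by linear matrix inequalities, so it is convex and compact. Since $f$ is continuous, a maximizer $\Sigma_V^*$ exists. Because $\delta>0$, the point $\Sigma_V=\epsilon\mathbf{I}$ with small $\epsilon>0$ strictly satisfies all three constraints ($\hat\Sigma_V\succeq\mathbf{0}$ by Lemma~\ref{lem:mimo-fisher}(i)). Hence a Slater/Robinson constraint qualification holds, and the KKT conditions are necessary even though $f$ need not be concave. With multipliers $L_0,M_1,M_2\succeq\mathbf{0}$ (the factor $1/\ln 2$ is absorbed into them), they read
\begin{equation*}
(\Sigma_V^*+\Sigma_1)^{-1}-(\Sigma_V^*+\Sigma_g)^{-1}+L_0-M_1-M_2=\mathbf{0},
\end{equation*}
together with $L_0\Sigma_V^*=\mathbf{0}$, $M_1(\Sigma_X+\delta\mathbf{I}-\Sigma_V^*)=\mathbf{0}$ and $M_2(\hat\Sigma_V+\delta\mathbf{I}-\Sigma_V^*)=\mathbf{0}$. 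By symmetry, $L_0\Sigma_V^*=\mathbf{0}$ also gives $\Sigma_V^*L_0=\mathbf{0}$.

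\textbf{Step 2: two algebraic identities, giving (i) and (ii).} The key observation is $\tilde\Sigma_1^{-1}\Sigma_V^*=(\Sigma_1^{-1}+L_0)\Sigma_V^*=\Sigma_1^{-1}\Sigma_V^*$. This immediately gives
\begin{equation*}
|\mathbf{I}+\tilde\Sigma_1^{-1}\Sigma_V^*|=|\mathbf{I}+\Sigma_1^{-1}\Sigma_V^*|,
\end{equation*}
which is exactly (ii). Next,
\begin{align*}
(\Sigma_V^*+\tilde\Sigma_1)^{-1}
&=(\mathbf{I}+\Sigma_1^{-1}\Sigma_V^*)^{-1}(\Sigma_1^{-1}+L_0)\\
&=(\Sigma_V^*+\Sigma_1)^{-1}+(\mathbf{I}+\Sigma_1^{-1}\Sigma_V^*)^{-1}L_0 .
\end{align*}
Since $(\mathbf{I}+\Sigma_1^{-1}\Sigma_V^*)L_0=L_0$ (because $\Sigma_V^*L_0=\mathbf{0}$), the last term equals $L_0$. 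Therefore
\begin{equation*}
(\Sigma_V^*+\tilde\Sigma_1)^{-1}=(\Sigma_V^*+\Sigma_1)^{-1}+L_0 .
\end{equation*}
For (i), $\tilde\Sigma_1\preceq\Sigma_1$ follows from $\Sigma_1^{-1}+L_0\succeq\Sigma_1^{-1}$. Combining the identity above with the KKT equation gives
\begin{equation*}
(\Sigma_V^*+\tilde\Sigma_1)^{-1}=(\Sigma_V^*+\Sigma_g)^{-1}+M_1+M_2\succeq(\Sigma_V^*+\Sigma_g)^{-1}.
\end{equation*}
Inverting reverses the order, so $\Sigma_V^*+\tilde\Sigma_1\preceq\Sigma_V^*+\Sigma_g$, i.e., $\tilde\Sigma_1\preceq\Sigma_g$.

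\textbf{Step 3: concavity and optimality, giving (iii).} This is the step I expect to need the most care. Set $B_1:=(A+\tilde\Sigma_1)^{-1}$ and $B_2:=(A+\Sigma_g)^{-1}$ for $A\succeq\mathbf{0}$. The second directional derivative of $\tilde f$ at $A$ in a symmetric direction $H$ is, up to a positive factor,
\begin{equation*}
-\operatorname{tr}(B_1HB_1H)+\operatorname{tr}(B_2HB_2H).
\end{equation*}
By (i), $B_1\succeq B_2\succeq\mathbf{0}$. The difference splits as
\begin{equation*}
\operatorname{tr}(B_1HB_1H)-\operatorname{tr}(B_2HB_2H)=\operatorname{tr}\big((B_1-B_2)HB_1H\big)+\operatorname{tr}\big(B_2H(B_1-B_2)H\big).
\end{equation*}
Each term on the right is the trace of a product of two PSD matrices (for example $B_1-B_2$ and $HB_1H$), hence nonnegative. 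So the second derivative is $\le0$ and $\tilde f$ is concave on the PSD cone.

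Finally, by Step 2,
\begin{equation*}
\nabla\tilde f(\Sigma_V^*)=(\Sigma_V^*+\tilde\Sigma_1)^{-1}-(\Sigma_V^*+\Sigma_g)^{-1}=M_1+M_2 .
\end{equation*}
Thus $\Sigma_V^*$ satisfies the KKT conditions of $\max_{\Sigma_V\in\mathcal{V}_\delta}\tilde f$ with multipliers $(\mathbf{0},M_1,M_2)$, and the same complementary slackness conditions as in Step 1. For a concave objective over a convex set, KKT conditions are sufficient, so $\Sigma_V^*\in\arg\max_{\Sigma_V\in\mathcal{V}_\delta}\tilde f(\Sigma_V)$.
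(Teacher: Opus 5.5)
Your proof is correct, and its skeleton is the paper's. Both use the same Slater-type constraint qualification and the same KKT system. Both derive the same key identity $(\Sigma_V^*+\tilde\Sigma_1)^{-1}=(\Sigma_V^*+\Sigma_1)^{-1}+L_0$: you obtain it from the factorization $\Sigma_V^*+\tilde\Sigma_1=\tilde\Sigma_1(\mathbf{I}+\Sigma_1^{-1}\Sigma_V^*)$, while the paper checks directly that the product with $(\Sigma_V^*+\Sigma_1)^{-1}+L_0$ equals $\mathbf{I}$. The determinant computation for (ii) is the same, and so is the final step of (iii): the gradient of $\tilde f$ at $\Sigma_V^*$ equals the remaining multipliers, and complementary slackness gives optimality.

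The one genuine difference is how concavity of $\tilde f$ is proved. The paper writes $\tilde f(\Sigma_V)=-g(\Sigma_V+\tilde\Sigma_1)$, where $g(\Sigma_N)=\log|\Sigma_N+\Delta|-\log|\Sigma_N|=2I(\mathbf{B};\mathbf{B}+\mathbf{N})$ for Gaussian $\mathbf{B}\sim\mathcal{N}(\mathbf{0},\Delta)$. It gets convexity of $g$ from the worst additive noise theorem applied to a two-component Gaussian mixture noise, plus a time-sharing bound $I(\mathbf{B};\mathbf{B}+\mathbf{N})\le I(\mathbf{B};\mathbf{B}+\mathbf{N}\,|\,T)$. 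You instead compute the second directional derivative. You then show that $B\mapsto\operatorname{tr}(BHBH)$ is monotone in the Loewner order by splitting the difference into two traces of products of PSD matrices, using $B_1\succeq B_2$ from (i).

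Your route is elementary and self-contained, and needs no external information-theoretic theorem. The paper's route avoids matrix differentiation and stays in the same mutual-information language as the extremal inequality of Lemma~\ref{lem:mimo-extremal}, but depends on the worst additive noise theorem. Both rely on (i) in the same way, through $\Delta=\Sigma_g-\tilde\Sigma_1\succeq\mathbf{0}$.
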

		\begin{IEEEproof}
			See Appendix~\ref{app:lem-kkt}.
		\end{IEEEproof}
		
		Lemma~\ref{lem:mimo-kkt} extends the channel enhancement technique in~\cite{Weingarten2006IT} to the setting with a sensing distortion constraint: the constraint $\Sigma_V\preceq\hat{\Sigma}_V+\delta\mathbf{I}$ in the feasible set $\mathcal{V}_\delta$ stems from the distortion characterization in Lemma~\ref{lem:mimo-fisher}. For any $\delta>0$, let $\Sigma_V^*$ and $\tilde{\Sigma}_1$ be as defined in Lemma~\ref{lem:mimo-kkt}, and let $\Sigma_U^*:=\Sigma_X+\delta\mathbf{I}-\Sigma_V^*\succeq\mathbf{0}$.
		
		We now show that $C(D,P)\le R_{\mathrm{G}}(D,P)$.
			Fix an arbitrary $p_{\mathbf{U}\mathbf{X}}$ satisfying
		\eqref{eq:mimo-conv-bound-b} and
		\eqref{eq:mimo-conv-bound-c}, and let $\delta>0$.
		By Remark \ref{rem:mimo-coupling}, the admissible coupling may be chosen
		separately for this distribution. 
We adopt the coupling identified below using
Lemma~\ref{lem:mimo-kkt}.
		Construct the enhanced channel output
		\begin{equation}
			\tilde{\mathbf{Y}}_1=\mathbf{X}+\tilde{\mathbf{Z}}_1,\quad
			\tilde{\mathbf{Z}}_1\sim\mathcal{N}(\mathbf{0},\tilde{\Sigma}_1),
		\end{equation}
		such that both $\mathbf{Y}_1$ and $\mathbf{Y}_2$ are degraded versions of $\tilde{\mathbf{Y}}_1$, i.e.,
		\begin{align}
			\mathbf{Y}_1&=\tilde{\mathbf{Y}}_1+\mathbf{W}_1,\quad
			\mathbf{W}_1\sim\mathcal{N}(\mathbf{0},\Sigma_1-\tilde{\Sigma}_1),\\
			\mathbf{Y}_2&=\tilde{\mathbf{Y}}_1+\mathbf{W}_2,\quad
			\mathbf{W}_2\sim\mathcal{N}(\mathbf{0},\Sigma_g-\tilde{\Sigma}_1),
		\end{align}
		where $\tilde{\mathbf{Z}}_1$, $\mathbf{W}_1$, and $\mathbf{W}_2$ are mutually independent and independent of $(\mathbf{U},\mathbf{X})$. Since $\tilde{\mathbf{Z}}_1+\mathbf{W}_1\sim\mathcal{N}(\mathbf{0},\Sigma_1)$ and $\tilde{\mathbf{Z}}_1+\mathbf{W}_2\sim\mathcal{N}(\mathbf{0},\Sigma_g)$, this construction preserves $p_{\mathbf{Y}_1|\mathbf{X}}$ and $p_{\mathbf{Y}_2\mathbf{S}|\mathbf{X}}$, provided that the conditional distribution of $\mathbf{S}$ given $\tilde{\mathbf{Z}}_1+\mathbf{W}_2$ is the same as that of $\mathbf{S}$ given $\mathbf{G}$ in the original channel. 
		Under this coupling, the Markov chain $(\mathbf{U},\mathbf{X})\leftrightarrow\tilde{\mathbf{Y}}_1\leftrightarrow(\mathbf{Y}_1,\mathbf{Y}_2)$ holds by construction. We now bound the two terms in~\eqref{eq:mimo-conv-bound-a} for the fixed $p_{\mathbf{U}\mathbf{X}}$. Since the coupling may depend on $p_{\mathbf{U}\mathbf{X}}$, these pointwise bounds can be used in the strengthened converse described in Remark~\ref{rem:mimo-coupling}.
		 For the second term,
		\begin{align}
			&I(\mathbf{X};\mathbf{Y}_1,\mathbf{Y}_2|\mathbf{U})+I(\mathbf{U};\mathbf{Y}_2)\notag\\
			&\overset{(a)}{\le}I(\mathbf{X};\tilde{\mathbf{Y}}_1,\mathbf{Y}_2|\mathbf{U})+I(\mathbf{U};\mathbf{Y}_2)\notag\\
			&=I(\mathbf{X};\tilde{\mathbf{Y}}_1|\mathbf{U})+I(\mathbf{X};\mathbf{Y}_2|\mathbf{U},\tilde{\mathbf{Y}}_1)+I(\mathbf{U};\mathbf{Y}_2)\notag\\
			&\overset{(b)}{=}I(\mathbf{X};\tilde{\mathbf{Y}}_1|\mathbf{U})+I(\mathbf{U};\mathbf{Y}_2)\notag\\
			&=h(\tilde{\mathbf{Y}}_1|\mathbf{U})-h(\mathbf{Y}_2|\mathbf{U})-h(\tilde{\mathbf{Z}}_1)+h(\mathbf{Y}_2)\notag\\
			&\overset{(c)}{\le}\frac{1}{2}\log\frac{|\hat{\Sigma}_V+\tilde{\Sigma}_1|}{|\hat{\Sigma}_V+\Sigma_g|}-\frac{1}{2}\log|\tilde{\Sigma}_1|+\frac{1}{2}\log|\Sigma_X+\Sigma_g|\notag\\
			&=\frac{1}{2}\tilde{f}(\hat{\Sigma}_V)-\frac{1}{2}\log|\tilde{\Sigma}_1|+\frac{1}{2}\log|\Sigma_X+\Sigma_g|\notag\\
			&\overset{(d)}{\le}\frac{1}{2}\tilde{f}(\Sigma_V^*)-\frac{1}{2}\log|\tilde{\Sigma}_1|+\frac{1}{2}\log|\Sigma_X+\delta\mathbf{I}+\Sigma_g|\notag\\
			&\overset{(e)}{=}\frac{1}{2}\log\frac{|\Sigma_V^*+\Sigma_1|}{|\Sigma_1|}+\frac{1}{2}\log\frac{|\Sigma_U^*+\Sigma_V^*+\Sigma_g|}{|\Sigma_V^*+\Sigma_g|},\label{eq:mimo-R2-bound-final}
		\end{align}
		where $(a)$ follows from $(\mathbf{U},\mathbf{X})\leftrightarrow(\tilde{\mathbf{Y}}_1,\mathbf{Y}_2)\leftrightarrow\mathbf{Y}_1$; $(b)$ follows from $(\mathbf{U},\mathbf{X})\leftrightarrow\tilde{\mathbf{Y}}_1\leftrightarrow\mathbf{Y}_2$; $(c)$ follows from Lemma~\ref{lem:mimo-extremal}, $h(\tilde{\mathbf{Z}}_1)=\frac{1}{2}\log\big((2\pi e)^K|\tilde{\Sigma}_1|\big)$, and $h(\mathbf{Y}_2)\le\frac{1}{2}\log\big((2\pi e)^K|\Sigma_X+\Sigma_g|\big)$; $(d)$ follows from $\hat{\Sigma}_V\in\mathcal{V}_\delta$ (Lemma~\ref{lem:mimo-fisher}(i)), Lemma~\ref{lem:mimo-kkt}(iii), and the monotonicity of the log-determinant; and $(e)$ follows from Lemma~\ref{lem:mimo-kkt}(ii) and $\Sigma_U^*+\Sigma_V^*=\Sigma_X+\delta\mathbf{I}$.
		
		On the other hand, by the maximum entropy property of the Gaussian distribution and $\Sigma_X\preceq\Sigma_U^*+\Sigma_V^*$,
		\begin{align}\label{eq:mimo-R1-bound-final}
			I(\mathbf{X};\mathbf{Y}_1)
			&\le\frac{1}{2}\log\frac{|\Sigma_X+\Sigma_1|}{|\Sigma_1|}\notag\\
			&\le\frac{1}{2}\log\frac{|\Sigma_U^*+\Sigma_V^*+\Sigma_1|}{|\Sigma_1|}.
		\end{align}
		The right-hand sides of~\eqref{eq:mimo-R2-bound-final} and~\eqref{eq:mimo-R1-bound-final} are exactly the two terms in~\eqref{eq:mimo-CG-a} evaluated at the covariance matrices $(\Sigma_U^*,\Sigma_V^*)$.
		
		We next verify that $(\Sigma_U^*,\Sigma_V^*)$ approximately satisfies the power and distortion constraints in~\eqref{eq:mimo-CG}. First, by $\Sigma_U^*+\Sigma_V^*=\Sigma_X+\delta\mathbf{I}$ and~\eqref{eq:mimo-conv-bound-c}, we have
		\begin{align}
			\operatorname{tr}(\Sigma_U^*+\Sigma_V^*)
			&=\operatorname{tr}(\Sigma_X)+K\delta \notag\\
			&\le P+K\delta.
		\end{align}
		Second, let $\hat{\Sigma}_g := \hat{\Sigma}_V + \Sigma_g$. Since $\Sigma_e(\cdot)$ is monotonically increasing with respect to the positive semidefinite order, combining this with $\Sigma_V^*\preceq\hat{\Sigma}_V+\delta\mathbf{I}$ yields
		\begin{align}
			\operatorname{tr}\Sigma_e(\Sigma_V^*) 
			&\le \operatorname{tr}\Sigma_e(\hat{\Sigma}_V + \delta \mathbf{I}) \notag\\
			&= \operatorname{tr}\Sigma_e(\hat{\Sigma}_V) + \operatorname{tr}\Big[ \Sigma_s \big( \hat{\Sigma}_g^{-1} - (\hat{\Sigma}_g + \delta \mathbf{I})^{-1} \big) \Sigma_s \Big] \notag\\
			&\overset{(a)}{\le} \operatorname{tr}\Sigma_e(\hat{\Sigma}_V) + \delta \operatorname{tr}\big( \hat{\Sigma}_g^{-2} \Sigma_s^2 \big) \notag\\
			&\overset{(b)}{\le} \operatorname{tr}\Sigma_e(\hat{\Sigma}_V) + \underbrace{\|\Sigma_g^{-1}\|_2^2 \operatorname{tr}(\Sigma_s^2)}_{\kappa} \delta \notag\\
			&{\le} D + \kappa \delta,
		\end{align}
		where $(a)$ follows from the matrix inverse identity $\hat{\Sigma}_g^{-1} - (\hat{\Sigma}_g + \delta \mathbf{I})^{-1} = \delta \hat{\Sigma}_g^{-1}(\hat{\Sigma}_g + \delta \mathbf{I})^{-1} \preceq \delta \hat{\Sigma}_g^{-2}$, and $(b)$ follows from the trace inequality $\operatorname{tr}(\hat{\Sigma}_g^{-2}\Sigma_s^2) \le \|\hat{\Sigma}_g^{-2}\|_2\operatorname{tr}(\Sigma_s^2)$ for positive semidefinite matrices, together with $\|\hat{\Sigma}_g^{-2}\|_2 \le \|\Sigma_g^{-1}\|_2^2$, which is implied by $\hat{\Sigma}_g \succeq \Sigma_g \succ \mathbf{0}$.
		
		Therefore, $(\Sigma_U^*,\Sigma_V^*)$ is a feasible solution to problem~\eqref{eq:mimo-CG} under the constraints $(D+\kappa\delta,P+K\delta)$. 
		Hence, by ~\eqref{eq:mimo-R2-bound-final}, \eqref{eq:mimo-R1-bound-final}, and the strengthened converse in Remark~\ref{rem:mimo-coupling},
		  $C(D,P)\le R_{\mathrm{G}}(D+\kappa\delta,P+K\delta)$. Since the feasible set of~\eqref{eq:mimo-CG} is compact and both the objective and constraint functions are continuous, letting $\delta\to0^+$ yields $C(D,P)\le R_{\mathrm{G}}(D,P)$. Combining this with Section~\ref{subsubsec:mimo-ach-proof}, we have
		\begin{equation*}
			R_{\mathrm{G}}(D,P)\le\underline{C}(D,P)\le C(D,P)\le R_{\mathrm{G}}(D,P),
		\end{equation*}
		and hence $C(D,P)=R_{\mathrm{G}}(D,P)$.
		\begin{figure}[t]
			\centering
			\includegraphics[width=0.8\linewidth]{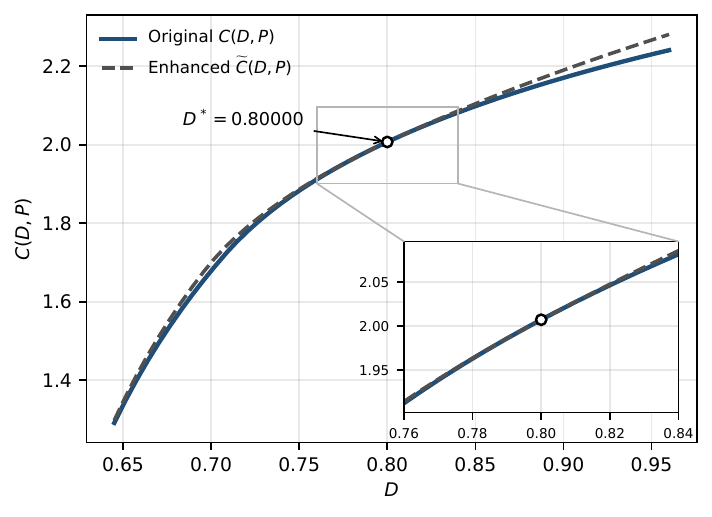}
			\caption{$C(D,P)$ of a $2\times2$ vector Gaussian ISAC channel and $\widetilde C(D,P)$ of the corresponding enhanced channel under the squared error distortion measure. The two capacity-distortion curves are tangent at $D^*=0.8$.}
			\label{fig:mimo-enhanced}
		\end{figure}
		
		Fig.~\ref{fig:mimo-enhanced} illustrates the above channel enhancement construction for $K=2$. Let the total power be $P=5$, and let
		\begin{equation*}
			\begin{gathered}
				\Sigma_1=
				\begin{pmatrix}
					0.378152&-0.519787\\
					-0.519787&1.916903
				\end{pmatrix},\\
				\Sigma_2=
				\begin{pmatrix}
					0.379263&0.150000\\
					0.150000&1.519805
				\end{pmatrix},\\
				\Sigma_s=
				\begin{pmatrix}
					1.788596&0.200000\\
					0.200000&0.424292
				\end{pmatrix}.
			\end{gathered}
		\end{equation*}
		In this case, $\Sigma_1$ and $\Sigma_g=\Sigma_2+\Sigma_s$ are not simultaneously orthogonally diagonalizable, and no positive semidefinite order exists between them. For a fixed distortion constraint $D^*=0.8$, the construction in Lemma~\ref{lem:mimo-kkt} yields
		\begin{equation*}
			\begin{aligned}
				L_0&\approx
				\begin{pmatrix}
					0.986512&-0.454823\\
					-0.454823&0.209692
				\end{pmatrix},\\
				\tilde\Sigma_1&\approx
				\begin{pmatrix}
					0.210647&-0.139235\\
					-0.139235&1.052334
				\end{pmatrix},
			\end{aligned}
		\end{equation*}
		where $\tilde\Sigma_1\preceq\Sigma_1$ and $\tilde\Sigma_1\preceq\Sigma_g$. Let $\widetilde C(D,P)$ denote the capacity obtained by replacing $\Sigma_1$ with $\tilde\Sigma_1$ in~\eqref{eq:mimo-CG}; then $\widetilde C(D,P)\geq C(D,P)$ for all feasible $D$. As shown in the figure, the two capacity-distortion curves are tangent at $(D^*,C(D^*,P))\approx(0.8,2.00716)$, and the capacity of the enhanced channel exceeds that of the original channel in a neighborhood on both sides of this point. This numerical result is consistent with the enhancement construction and the rate-preserving property in Lemma~\ref{lem:mimo-kkt}.

		\section{Conclusion}\label{sec:conclusion}
		
		This paper has studied the fundamental limits of bistatic ISAC systems over vector channels. For parallel channels in which the direction of degradation may differ across component channels, we have established a single-letter characterization of $C(D)$ and shown that jointly encoding the common layer across component channels exploits the complementary degradation directions of the two types of component channels, yielding a rate gain over independent coding on each component channel. For general vector channels without a global degradation order between the communication and sensing channels, we have derived single-letter lower and upper bounds on $C(D)$. In particular, for vector Gaussian channels with additive Gaussian states, we have shown that the lower bound is tight by means of a Fisher-information-based characterization of the distortion, the channel enhancement technique, and a matrix extremal inequality, thereby obtaining an exact characterization of $C(D,P)$: Gaussian superposition coding is optimal, and in the optimal scheme the sensing Rx performs partial decoding to assist state estimation.

		\appendices
		\section{Proof of Theorem~\ref{thm:mimo-conv}}\label{app:mimo-conv}
		
		By Fano's inequality,
		\begin{align}
			nR-n\epsilon_n
			&\le I(M;\mathbf{Y}_1^n)\notag\\
			&\le I(\mathbf{X}^n;\mathbf{Y}_1^n,\mathbf{Y}_2^n)\notag\\
			&=I(\mathbf{X}^n;\mathbf{Y}_2^n)+I(\mathbf{X}^n;\mathbf{Y}_1^n|\mathbf{Y}_2^n).\label{eq:mimo-conv-split}
		\end{align}
		For $t\in[1:n]$, let $\mathbf U_t:=(\mathbf Y_2^{t-1},\mathbf Y_{2,t+1}^n)$. We next bound the two terms in~\eqref{eq:mimo-conv-split} separately.
		
		For the first term,
		\begin{align}
			I(\mathbf{X}^n;\mathbf{Y}_2^n)
			&=\sum_{t=1}^{n}\big[H(\mathbf{Y}_{2,t}|\mathbf{Y}_2^{t-1})-H(\mathbf{Y}_{2,t}|\mathbf{X}^n,\mathbf{Y}_2^{t-1})\big]\notag\\
			&\overset{(a)}{=}\sum_{t=1}^{n}\big[H(\mathbf{Y}_{2,t}|\mathbf{Y}_2^{t-1})-H(\mathbf{Y}_{2,t}|\mathbf{X}_t)\big]\notag\\
			&{\le}\sum_{t=1}^{n}\big[H(\mathbf{Y}_{2,t})-H(\mathbf{Y}_{2,t}|\mathbf{X}_t,\mathbf{U}_t)\big]\notag\\
			&=\sum_{t=1}^{n}\big[I(\mathbf{U}_t;\mathbf{Y}_{2,t})+I(\mathbf{X}_t;\mathbf{Y}_{2,t}|\mathbf{U}_t)\big],\label{eq:mimo-conv-1}
		\end{align}
		where $(a)$ follows from the Markov chain $(\mathbf X^{n\setminus t},\mathbf U_t)\leftrightarrow\mathbf X_t\leftrightarrow(\mathbf Y_{1,t},\mathbf Y_{2,t},\mathbf S_t)$.
		
		For the second term,
		\begin{align}
			&I(\mathbf{X}^n;\mathbf{Y}_1^n|\mathbf{Y}_2^n)\notag\\
			&=\sum_{t=1}^{n}\big[H(\mathbf{Y}_{1,t}|\mathbf{Y}_1^{t-1},\mathbf{Y}_2^n)-H(\mathbf{Y}_{1,t}|\mathbf{X}^n,\mathbf{Y}_1^{t-1},\mathbf{Y}_2^n)\big]\notag\\
			&\overset{(a)}{=}\sum_{t=1}^{n}\big[H(\mathbf{Y}_{1,t}|\mathbf{Y}_1^{t-1},\mathbf{Y}_2^n)-H(\mathbf{Y}_{1,t}|\mathbf{X}_t,\mathbf{Y}_{2,t})\big]\notag\\
			&{\le}\sum_{t=1}^{n}\big[H(\mathbf{Y}_{1,t}|\mathbf{U}_t,\mathbf{Y}_{2,t})-H(\mathbf{Y}_{1,t}|\mathbf{X}_t,\mathbf{U}_t,\mathbf{Y}_{2,t})\big]\notag\\
			&=\sum_{t=1}^{n}I(\mathbf{X}_t;\mathbf{Y}_{1,t}|\mathbf{U}_t,\mathbf{Y}_{2,t}),\label{eq:mimo-conv-2}
		\end{align}
		where $(a)$ follows from the Markov chain $(\mathbf{X}^{n\setminus t},\mathbf{Y}_1^{t-1},\mathbf{U}_t)\leftrightarrow\mathbf{X}_t\leftrightarrow(\mathbf{Y}_{1,t},\mathbf{Y}_{2,t},\mathbf{S}_t)$.
		
		Let $Q\sim\mathrm{Unif}[1:n]$ be a time-sharing random variable independent of $(\mathbf{X}^n,\mathbf{S}^n,\mathbf{Y}_1^n,\mathbf{Y}_2^n)$, and define $\mathbf{U}:=(Q,\mathbf{U}_Q)$, $\mathbf{X}:=\mathbf{X}_Q$, $\mathbf{Y}_i:=\mathbf{Y}_{i,Q}$ ($i=1,2$), and $\mathbf{S}:=\mathbf{S}_Q$; then $\mathbf{U}\leftrightarrow\mathbf{X}\leftrightarrow(\mathbf{Y}_1,\mathbf{Y}_2,\mathbf{S})$. Substituting~\eqref{eq:mimo-conv-1} and~\eqref{eq:mimo-conv-2} into~\eqref{eq:mimo-conv-split}, dividing both sides by $n$, and applying the chain rule, we obtain
		\begin{align}
			R-\epsilon_n
			&\le\frac{1}{n}\sum_{t=1}^{n}\big[I(\mathbf{U}_t;\mathbf{Y}_{2,t})+I(\mathbf{X}_t;\mathbf{Y}_{1,t},\mathbf{Y}_{2,t}|\mathbf{U}_t)\big]\notag\\
			&=I(\mathbf{U}_Q;\mathbf{Y}_{2,Q}|Q)+I(\mathbf{X}_Q;\mathbf{Y}_{1,Q},\mathbf{Y}_{2,Q}|\mathbf{U}_Q,Q)\notag\\
			&\overset{(a)}{\le}I(\mathbf{U};\mathbf{Y}_2)+I(\mathbf{X};\mathbf{Y}_1,\mathbf{Y}_2|\mathbf{U}),\label{eq:mimo-conv-R2}
		\end{align}
		where $(a)$ follows from $I(\mathbf{U}_Q;\mathbf{Y}_{2,Q}|Q)\le I(Q,\mathbf{U}_Q;\mathbf{Y}_{2,Q})=I(\mathbf{U};\mathbf{Y}_2)$. Moreover, by~\eqref{eq:par-conv-fano},
		\begin{equation}\label{eq:mimo-conv-R1}
			R-\epsilon_n \le I(\mathbf{X};\mathbf{Y}_1).
		\end{equation}
		
		For the distortion constraint, we have
		\begin{align}\label{eq:mimo-conv-D}
			D+\epsilon
			&\ge \frac{1}{n}\sum_{t=1}^{n}\mathbb E\big[d(\mathbf S_t,\hat{\mathbf S}_t)\big]\notag\\
			&\ge\frac{1}{n}\sum_{t=1}^{n}d^*(\mathbf S_t|\mathbf Y_2^n)\notag\\
			&\overset{(a)}{=}\frac{1}{n}\sum_{t=1}^{n}d^*(\mathbf S_t|\mathbf U_t,\mathbf Y_{2,t})\notag\\
			&\overset{(b)}{=}d^*(\mathbf S|\mathbf U,\mathbf Y_2),
		\end{align}
		where $(a)$ follows from $\mathbf Y_2^n=(\mathbf U_t,\mathbf Y_{2,t})$ and $(b)$ follows from $\mathbf U=(Q,\mathbf U_Q)$.
		
		Combining~\eqref{eq:mimo-conv-R2}--\eqref{eq:mimo-conv-D}, we have $R-\epsilon_n\le\overline{C}(D+\epsilon)$. Letting $n\to\infty$ and $\epsilon\to0$ yields $C(D)\le\overline{C}(D)$.

		\section{Proof of Lemma~\ref{lem:mimo-fisher}}\label{app:lem-fisher}
		
		We first prove property (i). Let $\Sigma_{\mathbf{X}|\mathbf{u}}:=\operatorname{Cov}(\mathbf{X}|\mathbf{U}=\mathbf{u})$. Since $\mathbf{Y}_2=\mathbf{X}+\mathbf{G}$ and $\mathbf{G}\sim\mathcal{N}(\mathbf{0},\Sigma_g)$ is independent of $(\mathbf{U},\mathbf{X})$, we have
		\begin{align}
			\Sigma_g^{-1}
			&\overset{(a)}{\succeq}\bar{J}
			\overset{(b)}{\succeq}\mathbb{E}_{\mathbf{U}}\big[(\Sigma_{\mathbf{X}|\mathbf{U}}+\Sigma_g)^{-1}\big]\notag\\
			&\overset{(c)}{\succeq}\big(\mathbb{E}_{\mathbf{U}}[\Sigma_{\mathbf{X}|\mathbf{U}}]+\Sigma_g\big)^{-1}\notag\\
			&\overset{(d)}{\succeq}(\Sigma_X+\Sigma_g)^{-1},\label{eq:fisher-matrix-bound}
		\end{align}
		where $(a)$ follows from the Fisher information inequality~\cite{Zamir1998IT}, i.e., $J_{\mathbf{u}}=J(\mathbf{X}+\mathbf{G}|\mathbf{U}=\mathbf{u})\preceq J(\mathbf{G})=\Sigma_g^{-1}$ for any $\mathbf{u}$; $(b)$ follows from the Cram\'er--Rao-type inequality $J_{\mathbf{u}}\succeq\operatorname{Cov}(\mathbf{Y}_2|\mathbf{U}=\mathbf{u})^{-1}=(\Sigma_{\mathbf{X}|\mathbf{u}}+\Sigma_g)^{-1}$~\cite{DemboCoverThomas1991}; $(c)$ follows from the operator convexity of matrix inversion~\cite{Bhatia1997} and Jensen's inequality; and $(d)$ follows from $\mathbb{E}_{\mathbf{U}}[\Sigma_{\mathbf{X}|\mathbf{U}}]\preceq\operatorname{Cov}(\mathbf{X})\preceq\Sigma_X$ and the order-reversing property of matrix inversion. By~\eqref{eq:fisher-matrix-bound}, $\bar{J}\succ\mathbf{0}$; applying the order-reversing property again yields
		\begin{equation*}
			\mathbf{0}\preceq\bar{J}^{-1}-\Sigma_g=\hat{\Sigma}_V\preceq\Sigma_X,
		\end{equation*}
		which is property (i).
		
		We next prove property (ii). Given $\mathbf{U}=\mathbf{u}$, we have $\mathbf{Y}_2=\mathbf{S}+(\mathbf{X}+\mathbf{Z}_2)$, where $\mathbf{X}+\mathbf{Z}_2$ and $\mathbf{S}\sim\mathcal{N}(\mathbf{0},\Sigma_s)$ are conditionally independent. Hence,
		\begin{equation}
			p_{\mathbf{Y}_2|\mathbf{U}}(\mathbf{y}_2|\mathbf{u})
			=\int\phi_{\Sigma_s}(\mathbf{y}_2-\mathbf{w})\,\mathrm{d}P_{\mathbf{X}+\mathbf{Z}_2|\mathbf{U}=\mathbf{u}}(\mathbf{w}),
		\end{equation}
		where $\phi_{\Sigma_s}$ denotes the zero-mean Gaussian probability density function with covariance matrix $\Sigma_s$. Taking the gradient with respect to $\mathbf{y}_2$ yields
		\begin{align}
			&\nabla_{\mathbf{y}_2}\ln p_{\mathbf{Y}_2|\mathbf{U}}(\mathbf{y}_2|\mathbf{u})\notag\\
			&\quad=-\Sigma_s^{-1}\mathbb{E}\big[\mathbf{Y}_2-(\mathbf{X}+\mathbf{Z}_2)\,\big|\,\mathbf{Y}_2=\mathbf{y}_2,\mathbf{U}=\mathbf{u}\big]\notag\\
			&\quad=-\Sigma_s^{-1}\mathbb{E}\big[\mathbf{S}\,\big|\,\mathbf{Y}_2=\mathbf{y}_2,\mathbf{U}=\mathbf{u}\big].\label{eq:score-tweedie}
		\end{align}
		By the definition of the Fisher information matrix and~\eqref{eq:score-tweedie},
		\begin{equation}\label{eq:fisher-outer}
			\Sigma_sJ_{\mathbf{u}}\Sigma_s
			=\mathbb{E}\Big[\mathbb{E}[\mathbf{S}|\mathbf{Y}_2,\mathbf{U}=\mathbf{u}]\,\mathbb{E}[\mathbf{S}|\mathbf{Y}_2,\mathbf{U}=\mathbf{u}]^\top\,\Big|\,\mathbf{U}=\mathbf{u}\Big].
		\end{equation}
		On the other hand, since $\mathbf{S}$ is independent of $\mathbf{U}$, we have $\mathbb{E}[\mathbf{S}|\mathbf{U}=\mathbf{u}]=\mathbf{0}$ and $\operatorname{Cov}(\mathbf{S}|\mathbf{U}=\mathbf{u})=\Sigma_s$. By the law of total covariance,
		\begin{align}
			\Sigma_s
			&=\operatorname{Cov}(\mathbf{S}|\mathbf{Y}_2,\mathbf{U}=\mathbf{u})\notag\\
			&\quad+\mathbb{E}\Big[\mathbb{E}[\mathbf{S}|\mathbf{Y}_2,\mathbf{U}=\mathbf{u}]\,\mathbb{E}[\mathbf{S}|\mathbf{Y}_2,\mathbf{U}=\mathbf{u}]^\top\,\Big|\,\mathbf{U}=\mathbf{u}\Big].\label{eq:total-cov}
		\end{align}
		Combining~\eqref{eq:fisher-outer} and~\eqref{eq:total-cov} and taking the expectation over $\mathbf{U}$, we obtain
		\begin{align}\label{eq:fisher-distortion-identity}
			\operatorname{Cov}(\mathbf{S}|\mathbf{U},\mathbf{Y}_2)
			&=\Sigma_s-\Sigma_s\bar{J}\Sigma_s\notag\\
			&=\Sigma_s-\Sigma_s(\hat{\Sigma}_V+\Sigma_g)^{-1}\Sigma_s\notag\\
			&=\Sigma_e(\hat{\Sigma}_V),
		\end{align}
		where the second equality uses $\bar{J}=(\hat{\Sigma}_V+\Sigma_g)^{-1}$. By~\eqref{eq:mimo-var}, $d^*(\mathbf{S}|\mathbf{U},\mathbf{Y}_2)=\operatorname{tr}\Sigma_e(\hat{\Sigma}_V)$, which is property (ii).

		\section{Proof of Lemma~\ref{lem:mimo-extremal}}\label{app:lem-extremal}
		
		Let $\Delta:=\Sigma_g-\tilde{\Sigma}_1\succeq\mathbf{0}$. For $t\in[0,1]$, define the interpolated covariance matrix $\Omega_t:=\tilde{\Sigma}_1+t\Delta$ and the observation $\mathbf{Y}(t):=\mathbf{X}+\mathbf{N}(t)$, where $\mathbf{N}(t)\sim\mathcal{N}(\mathbf{0},\Omega_t)$ is independent of $(\mathbf{U},\mathbf{X})$. Then $(\mathbf{U},\mathbf{Y}(0))$ has the same distribution as $(\mathbf{U},\tilde{\mathbf{Y}}_1)$, and $(\mathbf{U},\mathbf{Y}(1))$ has the same distribution as $(\mathbf{U},\mathbf{Y}_2)$. Let $J_{\mathbf{u}}(t):=J(\mathbf{Y}(t)|\mathbf{U}=\mathbf{u})$ and $\bar{J}(t):=\mathbb{E}_{\mathbf{U}}[J_{\mathbf{U}}(t)]$; then $\bar{J}(1)=\bar{J}$.
		
		Let $\Gamma_t:=\Sigma_g-\Omega_t=(1-t)\Delta\succeq\mathbf{0}$. Since $\mathbf{Y}(1)$ has the same distribution as $\mathbf{Y}(t)$ plus an independent $\mathcal{N}(\mathbf{0},\Gamma_t)$ noise, the Fisher information inequality~\cite{Zamir1998IT} implies that $J_{\mathbf{u}}(1)^{-1}\succeq J_{\mathbf{u}}(t)^{-1}+\Gamma_t$, i.e.,
		\begin{equation}\label{eq:fii-step}
			J_{\mathbf{u}}(t)\succeq\big(J_{\mathbf{u}}(1)^{-1}-\Gamma_t\big)^{-1}.
		\end{equation}
		Define the mapping $\varphi_t(A):=(A^{-1}-\Gamma_t)^{-1}$. When $\Gamma_t\succ\mathbf{0}$, the identity
		\begin{equation}
			\varphi_t(A)=-\Gamma_t^{-1}+\Gamma_t^{-1}(\Gamma_t^{-1}-A)^{-1}\Gamma_t^{-1}
		\end{equation}
		and the operator convexity of matrix inversion~\cite{Bhatia1997} imply that $\varphi_t$ is operator convex on its domain; when $\Gamma_t$ is singular, the conclusion follows by continuity. Therefore,
		\begin{align}
			\bar{J}(t)
			&\overset{(a)}{\succeq}\mathbb{E}_{\mathbf{U}}\big[\varphi_t(J_{\mathbf{U}}(1))\big]
			\overset{(b)}{\succeq}\varphi_t(\bar{J})
			\overset{(c)}{=}(\hat{\Sigma}_V+\Omega_t)^{-1},\label{eq:j-bar-lower}
		\end{align}
		where $(a)$ follows from~\eqref{eq:fii-step}, $(b)$ follows from the operator Jensen inequality, and $(c)$ follows from $\hat{\Sigma}_V=\bar{J}^{-1}-\Sigma_g$.
		
		On the other hand,
		\begin{align}
			\frac{\mathrm{d}}{\mathrm{d}t}h(\mathbf{Y}(t)|\mathbf{U})
			&\overset{(a)}{=}\frac{1}{2\ln2}\operatorname{tr}\big(\bar{J}(t)\Delta\big)\notag\\
			&\overset{(b)}{\ge}\frac{1}{2\ln2}\operatorname{tr}\big((\hat{\Sigma}_V+\Omega_t)^{-1}\Delta\big)\notag\\
			&\overset{(c)}{=}\frac{1}{2}\frac{\mathrm{d}}{\mathrm{d}t}\log|\hat{\Sigma}_V+\Omega_t|,\label{eq:debruijn-bound}
		\end{align}
		where $(a)$ follows from the multivariate de Bruijn identity~\cite{PalomarVerdu2006IT} and $\dot{\Omega}_t=\Delta$, $(b)$ follows from~\eqref{eq:j-bar-lower} and $\Delta\succeq\mathbf{0}$, and $(c)$ follows from Jacobi's formula for the derivative of a determinant. Integrating~\eqref{eq:debruijn-bound} over $[0,1]$ yields
		\begin{equation}
			h(\mathbf{Y}_2|\mathbf{U})-h(\tilde{\mathbf{Y}}_1|\mathbf{U})
			\ge\frac{1}{2}\log\frac{|\hat{\Sigma}_V+\Sigma_g|}{|\hat{\Sigma}_V+\tilde{\Sigma}_1|},
		\end{equation}
		which, after rearrangement, gives~\eqref{eq:mimo-entropy-bound}.

		\section{Proof of Lemma~\ref{lem:mimo-kkt}}\label{app:lem-kkt}
		
		Since $\mathcal{V}_\delta$ is compact and $f$ is continuous, an optimal solution $\Sigma_V^*$ exists. Let $\Sigma_{X,\delta}:=\Sigma_X+\delta\mathbf{I}$ and $\hat{\Sigma}_{V,\delta}:=\hat{\Sigma}_V+\delta\mathbf{I}$. Since $\delta>0$, $\Sigma_V=\frac{\delta}{2}\mathbf{I}$ is a strictly feasible point, i.e., Slater's condition holds; hence, the KKT conditions are necessary for optimality. To simplify matrix differentiation, we multiply the objective function by the constant $\ln2$, and the corresponding Lagrangian is
		\begin{align}
			\mathcal{L}&=(\ln2)f(\Sigma_V)+\operatorname{tr}(L_0\Sigma_V)\notag\\
			&\quad-\operatorname{tr}\big[L_X(\Sigma_V-\Sigma_{X,\delta})\big]-\operatorname{tr}\big[L_V(\Sigma_V-\hat{\Sigma}_{V,\delta})\big],
		\end{align}
		where $L_0,L_X,L_V\succeq\mathbf{0}$ are Lagrange multipliers. Then $\Sigma_V^*$ satisfies
		\begin{align}
			\mathbf{0}&=(\Sigma_V^*+\Sigma_1)^{-1}-(\Sigma_V^*+\Sigma_g)^{-1}+L_0-L_X-L_V,\label{eq:kkt-stat}\\
			\mathbf{0}&=L_0\Sigma_V^*,\label{eq:kkt-slack-0}\\
			\mathbf{0}&=L_X(\Sigma_{X,\delta}-\Sigma_V^*),\label{eq:kkt-slack-X}\\
			\mathbf{0}&=L_V(\hat{\Sigma}_{V,\delta}-\Sigma_V^*).\label{eq:kkt-slack-XV}
		\end{align}
		
		\textit{Property (i):} Since $\tilde{\Sigma}_1^{-1}=\Sigma_1^{-1}+L_0\succeq\Sigma_1^{-1}$, the order-reversing property of matrix inversion yields $\tilde{\Sigma}_1\preceq\Sigma_1$. To show that $\tilde{\Sigma}_1\preceq\Sigma_g$, note that $\tilde{\Sigma}_1^{-1}=\Sigma_1^{-1}+L_0$ implies $\tilde{\Sigma}_1-\Sigma_1=-\tilde{\Sigma}_1L_0\Sigma_1$. Thus,
		\begin{align}
			&(\Sigma_V^*+\tilde{\Sigma}_1)\big[(\Sigma_V^*+\Sigma_1)^{-1}+L_0\big]\notag\\
			&\overset{(a)}{=}\mathbf{I}+(\tilde{\Sigma}_1-\Sigma_1)(\Sigma_V^*+\Sigma_1)^{-1}+\tilde{\Sigma}_1L_0\notag\\
			&=\mathbf{I}+\tilde{\Sigma}_1L_0\big[\mathbf{I}-\Sigma_1(\Sigma_V^*+\Sigma_1)^{-1}\big]\notag\\
			&=\mathbf{I}+\tilde{\Sigma}_1L_0\Sigma_V^*(\Sigma_V^*+\Sigma_1)^{-1}\notag\\
			&\overset{(b)}{=}\mathbf{I},
		\end{align}
		where both $(a)$ and $(b)$ follow from~\eqref{eq:kkt-slack-0}. Therefore,
		\begin{equation}\label{eq:inv-identity}
			(\Sigma_V^*+\tilde{\Sigma}_1)^{-1}=(\Sigma_V^*+\Sigma_1)^{-1}+L_0.
		\end{equation}
		Substituting~\eqref{eq:inv-identity} into~\eqref{eq:kkt-stat} yields
		\begin{align}
			(\Sigma_V^*+\tilde{\Sigma}_1)^{-1}&=(\Sigma_V^*+\Sigma_g)^{-1}+L_X+L_V\notag\\
			&\succeq(\Sigma_V^*+\Sigma_g)^{-1},\label{eq:enhanced-gradient}
		\end{align}
		and the order-reversing property then gives $\tilde{\Sigma}_1\preceq\Sigma_g$.
		
		\textit{Property (ii):} By $\tilde{\Sigma}_1^{-1}=\Sigma_1^{-1}+L_0$ and~\eqref{eq:kkt-slack-0},
		\begin{align}
			\frac{|\Sigma_V^*+\tilde{\Sigma}_1|}{|\tilde{\Sigma}_1|}
			&=\big|\mathbf{I}+(\Sigma_1^{-1}+L_0)\Sigma_V^*\big|\notag\\
			&=\big|\mathbf{I}+\Sigma_1^{-1}\Sigma_V^*\big|\notag\\
			&=\frac{|\Sigma_V^*+\Sigma_1|}{|\Sigma_1|}.
		\end{align}
		
		\textit{Property (iii):} Let $\Delta:=\Sigma_g-\tilde{\Sigma}_1$, which satisfies $\Delta\succeq\mathbf{0}$ by property (i). Then $\tilde{f}(\Sigma_V)=-g(\Sigma_V+\tilde{\Sigma}_1)$, where
		\begin{equation}
			g(\Sigma_N):=\log|\Sigma_N+\Delta|-\log|\Sigma_N|,\quad\Sigma_N\succ\mathbf{0}.
		\end{equation}
		We now show that $g$ is convex. For any $\Sigma_{N,1},\Sigma_{N,2}\succ\mathbf{0}$ and $\lambda\in[0,1]$, let $\Sigma_{N,\lambda}:=\lambda\Sigma_{N,1}+(1-\lambda)\Sigma_{N,2}$. Let $\mathbf{B}\sim\mathcal{N}(\mathbf{0},\Delta)$ and $T\sim\operatorname{Bern}(\lambda)$ be mutually independent, and let $\mathbf{N}$ be distributed as $\mathcal{N}(\mathbf{0},\Sigma_{N,1})$ if $T=1$ and as $\mathcal{N}(\mathbf{0},\Sigma_{N,2})$ if $T=0$, independently of $\mathbf{B}$. Then $\operatorname{Cov}(\mathbf{N})=\Sigma_{N,\lambda}$, and hence
		\begin{align}
			g(\Sigma_{N,\lambda})
			&\overset{(a)}{\le}2I(\mathbf{B};\mathbf{B}+\mathbf{N})\notag\\
			&\overset{(b)}{\le}2I(\mathbf{B};\mathbf{B}+\mathbf{N},T)\notag\\
			&\overset{(c)}{=}2I(\mathbf{B};\mathbf{B}+\mathbf{N}|T)\notag\\
			&\overset{(d)}{=}\lambda g(\Sigma_{N,1})+(1-\lambda)g(\Sigma_{N,2}),
		\end{align}
		where $(a)$ follows from the worst additive noise theorem~\cite{DiggaviCover2001IT}, $(b)$ follows from the chain rule and the nonnegativity of mutual information, $(c)$ follows from the independence of $\mathbf{B}$ and $T$, and $(d)$ holds since $\mathbf{N}$ is Gaussian given $T$. Therefore, $g$ is convex, and $\tilde{f}$ is concave on the positive semidefinite cone.
		
		Finally, by~\eqref{eq:enhanced-gradient}, the gradient of $\tilde{f}$ at $\Sigma_V^*$ is $\frac{1}{\ln2}\big[(\Sigma_V^*+\tilde{\Sigma}_1)^{-1}-(\Sigma_V^*+\Sigma_g)^{-1}\big]=\frac{1}{\ln2}(L_X+L_V)$. By the concavity of $\tilde{f}$, for any $\Sigma_V\in\mathcal{V}_\delta$,
		\begin{align}
			\tilde{f}(\Sigma_V)
			&\le\tilde{f}(\Sigma_V^*)+\frac{1}{\ln2}\operatorname{tr}\big[(L_X+L_V)(\Sigma_V-\Sigma_V^*)\big]\notag\\
			&\overset{(a)}{=}\tilde{f}(\Sigma_V^*)+\frac{1}{\ln2}\operatorname{tr}\big[L_X(\Sigma_V-\Sigma_{X,\delta})\big]\notag\\
			&\qquad+\frac{1}{\ln2}\operatorname{tr}\big[L_V(\Sigma_V-\hat{\Sigma}_{V,\delta})\big]\notag\\
			&\overset{(b)}{\le}\tilde{f}(\Sigma_V^*),
		\end{align}
		where $(a)$ follows from~\eqref{eq:kkt-slack-X}–\eqref{eq:kkt-slack-XV} and $(b)$ follows from $L_X,L_V\succeq\mathbf{0}$, $\Sigma_V\preceq\Sigma_{X,\delta}$, and $\Sigma_V\preceq\hat{\Sigma}_{V,\delta}$. Therefore, $\Sigma_V^*\in\arg\max_{\Sigma_V\in\mathcal{V}_\delta}\tilde{f}(\Sigma_V)$, which establishes property (iii).

	\end{document}